\newcommand{\be}[1]{{\color{blue} #1}}
\begin{document}
\title{Canadian Traveller Problem with Predictions}%
%
\author{Evripidis Bampis\inst{1} \and
Bruno Escoffier\inst{1,2} \and
Michalis Xefteris\inst{1}}
\authorrunning{E. Bampis et al.}
%
\institute{Sorbonne Universit\'e, CNRS, LIP6, F-75005 Paris, France \and
Institut Universitaire de France, Paris, France}
\maketitle              
\begin{abstract}
In this work, we consider the $k$-Canadian Traveller Problem ($k$-CTP) under the learning-augmented framework proposed by Lykouris \& Vassilvitskii~\cite{Lykouris}.  $k$-CTP is a generalization of the shortest path problem, and involves a traveller who knows the entire graph in advance and wishes to find the shortest route from a source vertex $s$ to a destination vertex $t$, but discovers online that some edges (up to $k$) are blocked once reaching them. A potentially imperfect predictor gives us the number and the locations of the blocked edges.

We present a deterministic and a randomized online algorithm for the learning-augmented $k$-CTP that achieve a tradeoff between consistency (quality of the solution when the prediction is correct) and robustness (quality of the solution when there are errors in the prediction). Moreover, we prove a matching lower bound for the deterministic case establishing that the tradeoff between consistency and robustness is optimal, and show a lower bound for the randomized algorithm.
Finally, we prove several deterministic and randomized lower bounds on the competitive ratio of $k$-CTP depending on the prediction error, and complement them, in most cases, with matching upper bounds.

\keywords{Canadian Traveller Problem  \and Online algorithm \and Learning augmented algorithm.}
\end{abstract}

\section{Introduction}
Motivated by various applications including online route planning in road networks, or message routing in communication networks, the Canadian Traveller problem (CTP), introduced in 1991 by Papadimitriou and Yannakakis \cite{PY}, is a generalization of one of the most prominent problems in Computer Science, the Shortest Path Problem \cite{Lawler,Papadimitriou82}. In CTP the underlying graph is given in advance, but it is unreliable, i.e. some edges may become unavailable (e.g. because of snowfall, or link failure) in an online manner. The blockage of an edge becomes known to the algorithm only when it arrives at one of its extremities. The objective  is to devise an efficient adaptive strategy minimizing the ratio between the length of the path found and the optimum (where the blocked edges are removed).
Papadimitriou and Yannakakis \cite{PY} proved that the problem of devising an algorithm that guarantees a given competitive ratio is PSPACE-complete if the number of blocked edges is not fixed. 
Given the intractability of CTP, Bar-Noy and Schieber   \cite{BNS} focused on $k$-CTP, a special case of CTP where the number of blocked edges is bounded by $k$. Here, we consider $k$-CTP in the framework of {\em learning-augmented} online algorithms \cite{Lykouris,MV}. It is natural to consider that in applications, like route planning, or message routing in communication networks, predictions may be provided  on the input data.
Our aim is to study the impact of the quality of such predictions on the performance  of online algorithms for $k$-CTP.

Formally, in $k$-CTP we consider a connected undirected graph $\mathcal{G}=(V,E)$ with a source node $s$, a destination node $t$ and a non-negative cost function $c: E \rightarrow \mathbb{R^+}$ representing the cost to traverse each edge. An agent seeks to find a shortest path from $s$ to $t$. However, one or more edges (up to $k$) might be blocked, and thus cannot be traversed. An agent only learns that an edge is blocked when reaching one of its endpoints.

In classical competitive analysis,  a deterministic online algorithm $ALG$ for $k$-CTP is $c$-competitive if the total length $ALG(\sigma)$ traversed by $ALG$ for input $\sigma$ is at most $c \cdot OPT(\sigma)$, where $OPT(\sigma)$ is the length of a shortest $s-t$ path in $G$ without the blocked edges~\cite{sleator}. A randomized algorithm is $c$-competitive against an oblivious adversary if the expected cost $\mathbb{E}[ALG(\sigma)]$ is at most $c \cdot OPT(\sigma)$~\cite{Borodin}. 

Bar-Noy and Schieber~\cite{BNS} considered $k$-CTP and they proposed a polynomial time algorithm that minimizes the maximum travel length.
Westphal in~\cite{WESTPHAL} gave a simple online deterministic algorithm for $k$-CTP which is $(2k+1)$-competitive. He also proved that no deterministic online algorithm with a better competitive ratio exists. Furthermore, he showed a lower bound for any randomized algorithm of $k+1$, even if all $s-t$ paths are node disjoint. Xu et al.~\cite{Xu} proposed a deterministic algorithm that is also $(2k+1)$-competitive for $k$-CTP. They also proved that  a natural greedy strategy based on the available blockage information is exponential in $k$. 
A $(k+1)$-competitive randomized online algorithm for $k$-CTP is known on graphs where all $s-t$ paths are node-disjoint~\cite{bender,Shiri}.
Demaine et al.~\cite{Demaine} proposed a polynomial time randomized algorithm that improves the deterministic lower bound of $2k+1$ by an $o(1)$ factor for arbitrary graphs. They also showed that the competitive ratio is even better if the randomized algorithm runs in pseudo-polynomial time. More recently, Berg\'e et al.~\cite{berge} proved that the competitive ratio of any randomized memoryless strategy (agent's strategy does not depend on his/her anterior moves) cannot be better than $2k + O\left( 1\right)$. Several other variants of the problem have been studied in the recent years~\cite{Huang}, \cite{KN}.

Given the widespread of Machine Learning technology, in the last years, predictions from ML are used in order to improve the worst case analysis of online algorithms~\cite{Devenur,Vee,Cole,Medina}. The formal framework for these learning-augmented algorithms has been presented by Lykouris and Vassilvitskii in their seminal paper~\cite{Lykouris}, where they studied the caching problem. In this framework, no assumption is made concerning the quality of the predictor and the challenge is to design a learning-augmented online algorithm that finds a good tradeoff between the two extreme alternatives, i.e. following blindly the predictions, or simply ignore them. 
Ideally, the objective is to produce algorithms using predictions that are {\em consistent}, i.e. whose performance is close to the best offline algorithm when the prediction is accurate,  and {\em robust}, i.e. whose performance is close to the online algorithm without predictions when the prediction is bad.



Antoniadis et al. \cite{Antoniadis20}, Rohatgi~\cite{Rohatgi} and Wei~\cite{wei2} subsequently gave simpler and improved algorithms for the caching problem. Kumar et al. in~\cite{Kumar} applied the learning-augmented setting to ski rental and online scheduling. For the same problems, Wei and Zhang, \cite{Wei}, provided a set of non-trivial lower bounds for competitive analysis of learning-augmented online algorithms. Many other papers have been published in this direction for ski rental \cite{Gollapudi,WangLW20,Banerjee20,Bamas2}, scheduling  \cite{AzarLT21,Bamas1,Lattanzi20,Mitzenmacher20,Im}, the online $k$-server problem~\cite{Lindermayr}, $k$-means clustering~\cite{k-means} and others~\cite{AntoniadisGKK20,LuRSZ21,LavastidaM0X21}.

\subsection{Our Contribution}
In this work, we study the $k$-Canadian Traveller Problem through the lens of online algorithms with predictions. We present both deterministic and randomized upper and lower bounds for the problem. Following previous works we focus on path-disjoint graphs for the randomized case\footnote[1]{As mentioned in earlier, while a $(2k+1)$-competitive (matching the lower bound) deterministic algorithm is known for general graph, a $(k+1)$-competitive randomized algorithm (matching the lower bound) is only known for path-disjoint graphs.}. We use a simple model where we are given predictions on the locations of the blocked edges. For example, consider a situation wherein you need to follow the shortest route to a destination. You usually open the Maps app on your phone to find you the best route. Maps app does that using predictions about the weather condition, the traffic jam etc. These predictions capture additional side information about the route we should follow. In our model, the error of the prediction is just the total number of false predictions we get. The parameter $k$ upper bounds the number of real blocked edges (denoted by $\kappa$, usually unknown when the algorithm starts) and the number of predicted blocked edges (denoted by $k_p$) in the graph, meaning that we want to design algorithms that hedge against all situations where up to $k$ edges can be blocked (and up to $k$ predicted to be blocked).

In Section~\ref{sec:tradeoffs}, we give the main results of this paper which are algorithms with predictions for $k$-CTP (deterministic and randomized ones) that are as consistent and robust as possible. 
More precisely, we say that an algorithm is $(a, b]$-competitive ($[a,b]$-competitive), when it achieves a competitive ratio smaller than (no more than) $a$ when the prediction is correct and no more than $b$ otherwise. 
 Our aim is to answer the following question: if we want an algorithm which is $(1+\epsilon)$-competitive if the prediction is correct (consistency $1+\epsilon$), what is the best competitive ratio we can get when the prediction is not correct (robustness)?
The parameter $\epsilon > 0$ is user defined, possibly adjusted depending on her/his level of trust in the predictions.
The results are presented in Table~\ref{table:1}. We give a deterministic $\big(1+\epsilon,2k-1+\frac{4k}{\epsilon}\big]$-lower bound and a matching upper bound. For the randomized case, we give a randomized $\big[1+\epsilon,k+\frac{k}{\epsilon}\big]$-lower bound and an algorithm that achieves a tradeoff of $\big[1+\epsilon,k+\frac{4k}{\epsilon}\big]$ on path-disjoint graphs, when $k$ is considered as known. We note that the above lower bounds are also valid when the parameter is $\kappa$ (the real number of blocked edges). In most real world problems such as the ones described earlier, the number of blocked edges is usually small and we can get interesting tradeoffs between consistency and robustness.

\begin{table}[h!]
\caption{Our bounds on the tradeoffs between consistency and robustness for our learning-augmented model ($0 < \epsilon \leq 2k$ for the deterministic case, and $0 < \epsilon \leq k$ for the randomized one).}
\centering
\setlength{\extrarowheight}{0.1cm}

\begin{tabular}{c|c|c}
\hline
\multirow{3}{6em}{\begin{tabular}[c]{@{}c@{}}Deterministic\\ algorithms\end{tabular}} & Lower bound & \begin{tabular}[c]{@{}c@{}}\big($1+\epsilon$, $2k-1+\frac{4k}{\epsilon}$\big]\\ Theorem~\ref{theor_det_e}\end{tabular} \\ \cline{2-3} 
                                                                                    & Upper bound & \begin{tabular}[c]{@{}c@{}}\big($1+\epsilon$, $2k-1+\frac{4k}{\epsilon}$\big]\\ Theorem~\ref{alg_det_e}\end{tabular} \\ \hline \hline
\multirow{3}{6em}{\begin{tabular}[c]{@{}c@{}}Randomized\\ algorithms\end{tabular}}    & Lower bound & \begin{tabular}[c]{@{}c@{}}\big[$1+\epsilon$, $k+\frac{k}{\epsilon}$\big]\\ Theorem~\ref{theor_rand_e}\end{tabular} \\ \cline{2-3} 
                                                                                    & Upper bound & \begin{tabular}[c]{@{}c@{}}\big[$1+\epsilon$, $k+\frac{4k}{\epsilon}$\big]\\  Theorem~\ref{alg_rand_e}\end{tabular} \\ \hline
\end{tabular}
\label{table:1}
\end{table}

In Section~\ref{sec:robustness}, we explore the competitive ratios of $k$-CTP that can be achieved depending on the error of the predictor.
Besides consistency and robustness, most works in this area classically study smooth error dependencies for the competitive ratio \cite{Bamas2,Rohatgi}. In this paper this is not the case, since the error is highly non-continuous. 
Our analysis contains both deterministic and randomized lower bounds complemented with matching upper bounds in almost all cases. These results are presented in Table~\ref{table:2} and justify the model of consistency-robustness tradeoff we chose in the previous section. All lower bounds are also valid if the parameter is $\kappa$. Note that for all upper bounds (except for $c^*=2k+1$) the parameter $k$ is considered to be known in advance. For the randomized case, the upper bound is given for path-disjoint graphs.

\begin{table}[h!]
\caption{Our bounds on the competitive ratio $c^*$ with respect to $error$, $k$. The upper bounds $2k+1$ and $k+1$ are also valid for $error \leq t$, for any $t\geq 2$.
}
\centering
\setlength{\extrarowheight}{0.13cm}

\begin{tabular}{cc|ccc}
\multicolumn{2}{c|}{}                                                                                                  & \multicolumn{1}{c|}{$k=1$}                                                    & \multicolumn{1}{c|}{$k=2$}                                                       & $k \geq 3$                                      \\ \hline
\multicolumn{1}{c|}{\multirow{3}{6em}{\begin{tabular}[c]{@{}c@{}}Deterministic\\ algorithms\end{tabular}}} & $error \leq 1$ & \multicolumn{1}{c|}{\begin{tabular}[c]{@{}c@{}}$c^*=3$\\ Theorem~\ref{theor_k_1}\end{tabular}} & \multicolumn{1}{c|}{\begin{tabular}[c]{@{}c@{}}$c^*=\frac{3+\sqrt{17}}{2}\simeq 3.56$\\ Theorems~\ref{theoremLBk=2}~and~\ref{theoremUBk=2}\end{tabular}} & \begin{tabular}[c]{@{}c@{}}$c^*=2k-1$\\ Theorems~\ref{theorem2k-1}~and~\ref{theorem2}\end{tabular} \\ \cline{2-5} 
\multicolumn{1}{c|}{}                                                                              & $error \leq 2$ & \multicolumn{3}{c}{\begin{tabular}[c]{@{}c@{}}$c^*=2k+1$\\ Theorem~\ref{theorem1}\end{tabular}}                                                                                                                                             \\ \hline \hline
\multicolumn{1}{c|}{\multirow{3}{6em}{\begin{tabular}[c]{@{}c@{}}Randomized\\ algorithms\end{tabular}}}    & $error \leq 1$ & \multicolumn{3}{c}{\begin{tabular}[c]{@{}c@{}}$c^*\geq k$\\ Theorem~\ref{th:randk}\end{tabular}}                                                                                                                                                \\ \cline{2-5} 
\multicolumn{1}{c|}{}                                                                              & $error \leq 2$ & \multicolumn{3}{c}{\begin{tabular}[c]{@{}c@{}}$c^*=k+1$\\ Theorem~\ref{theor_rand_k1}\end{tabular}}                                                                                                                                              \\ \hline
\end{tabular}
\label{table:2}
\end{table}

\section{Preliminaries}
We introduce two algorithms of the literature that are useful for our work and some notation we use in the rest of the technical sections.

\subsection{Deterministic and randomized algorithms}
As mentioned in the introduction, Westphal in~\cite{WESTPHAL} gave an optimal deterministic algorithm \textsc{Backtrack} for $k$-CTP, which is $(2k+1)$-competitive (note that the algorithm does not need to know $k$).

\textsc{Backtrack}: An agent begins at source $s$ and follows the cheapest $s-t$ path on the graph. When the agent learns about a blocked edge on the path to $t$, he/she returns to $s$ and takes the cheapest $s-t$ path without the blocked edge discovered. The agent repeats this strategy until he/she arrives at $t$. Observe that he/she backtracks at most $k$ times, since there are no more than $k$ edges blocked, and thus \textsc{Backtrack} is $(2k+1)$-competitive.


\begin{figure}
\centering
    \includegraphics[scale=0.35]{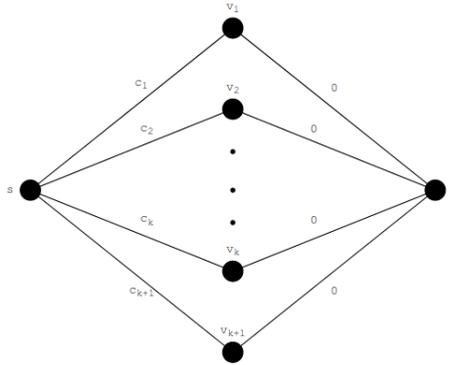}
    \caption[]{The graph $\mathcal{G}^*$ for the proof of lower bounds.}
    \label{fig:graph}
\end{figure}


Concerning randomized algorithm, the proof of the lower bound of $(k+1)$~\cite{WESTPHAL}, holding even if all $s-t$ paths are node-disjoint (besides $s$ and $t$), uses the graph in Figure~\ref{fig:graph} with $c_1=c_2=\dots=c_{k+1}$ (note that the lower bound holds when restricting to positive length  by replacing $0$ with $\epsilon>0$). The matching upper bound of $(k+1)$~\cite{bender,Shiri}, which is known to hold only when the paths are node-disjoint, is based on a randomized algorithm that we will call \textsc{RandBacktrack} in the sequel. The very general idea of this algorithm, which can be seen as a randomized version of \textsc{Backtrack}, is the following. 

\textsc{RandBacktrack}: Consider the $k+1$ shortest $s-t$ paths in the graph. The algorithm defines an appropriate probability distribution and then chooses a path according to this distribution that the agent tries to traverse. If this path is feasible, the algorithm terminates. If it is blocked, the agent returns to $s$ and repeats the procedure for a smaller set of paths.





\subsection{Notations}
For every edge $e \in E$, we get a prediction on whether $e$ is blocked or not. We define the error of the predictions as the total number of false predictions we have compared to the real instance. Formally every edge has prediction error $\textsc{er}(e)\in \{0,1\}$. 

The total prediction error is given by:
\begin{equation*}
    error = \sum_{e \in E}{\textsc{er}(e)}
\end{equation*}

For proving lower bounds, we will refer several times  to the graph $\mathcal{G}^*$, which means the graph in Figure~\ref{fig:graph}.   When we say that $P_i$ is blocked, we mean that the edge with cost $0$ of $P_i$ is blocked. We omit these details in the proofs for ease of explanation. Moreover, when we refer to a path we always mean an $s-t$ path in the rest of the paper. 
Note that the lower bounds we get occur with strictly positive values on the edges, simply by replacing $0$ with $\epsilon>0$ in $\mathcal{G}^*$.

We denote by $OPT$ the optimal offline cost of the $k$-CTP instance, $ALG$ the value of an algorithm under study, and by $r$ the competitive ratio of an online algorithm to avoid any confusion with the cost $c$ of an edge on the graph. More specifically, $r=\frac{ALG}{OPT}$ in the deterministic case and $r=\frac{\mathbb{E}[ALG]}{OPT}$ in the randomized one.

\section{Tradeoffs between consistency and robustness}\label{sec:tradeoffs}

In this section, we study the tradeoffs between consistency and robustness. As explained before, we express tradeoffs by answering the following question: if we want our algorithm to be $(1+\epsilon)$-competitive if the prediction is correct (consistency $1+\epsilon$), what is the best competitive ratio we can get when the prediction is not correct (robustness)?

We deal with the deterministic case in Section~\ref{subsec:tradeoffdet} and the randomized one in Section~\ref{subsec:tradeoffrand}.

\subsection{Tradeoffs for deterministic  algorithms}\label{subsec:tradeoffdet}

\begin{theorem}
\label{theor_det_e}
Any deterministic 
algorithm that achieves competitive ratio smaller than $1+\epsilon$ when the prediction is correct, achieves a ratio of at least $2k-1+\frac{4k}{\epsilon}$ when the prediction is not correct, even when the error is at most $2$ and the graph is path-disjoint.
\end{theorem}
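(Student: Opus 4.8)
The plan is to run an adaptive adversary argument on the graph $\mathcal{G}^*$ of Figure~\ref{fig:graph}, exploiting the fact that a $(1+\epsilon)$-consistent algorithm is forced to commit to (and fully traverse) a long path whenever the prediction points there. I would instantiate $\mathcal{G}^*$ with $k+1$ node-disjoint paths $P_1,\dots,P_{k+1}$, giving the $k$ ``short'' paths $P_1,\dots,P_k$ cost $1$ each and the ``long'' path $P_{k+1}$ cost $M:=\frac{2k}{\epsilon}$. Recall that on $\mathcal{G}^*$ the blocked edge of a path sits next to $t$, so an agent learns a path is blocked only after paying its full cost and must then pay it again to return to $s$; probing a blocked path of cost $c$ thus costs $2c$. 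The prediction $\pi$ declares $P_1,\dots,P_k$ blocked and $P_{k+1}$ open. Note that $\pi$ predicts exactly $k$ blockages and every instance I build has at most $k$ real blockages, so both stay within the budget.

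First I would use consistency to pin down the algorithm's behaviour. Consider the error-free instance $I_0$ in which the real blockages match $\pi$ (the $k$ short paths blocked, $P_{k+1}$ open); here $OPT=M$ and the only way to reach $t$ is to traverse $P_{k+1}$ in full. If $S$ denotes the set of short paths the algorithm probes before it first reaches the endpoint of the cost-$0$ edge of $P_{k+1}$ (paying $M$), its total cost is at least $2|S|+M$, so $(1+\epsilon)$-consistency forces $2|S|+M<(1+\epsilon)M=M+2k$, hence $|S|\le k-1$. The crucial consequence is that at least one short path, say $P_j$ with $j\notin S$, is never probed before the algorithm commits to and pays for the traversal of $P_{k+1}$.

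Next I would build the bad instance $I^*$ by flipping exactly two edges of $\pi$: block $P_{k+1}$ and open $P_j$ (so $error=2$, and exactly $k$ paths remain blocked). Up to the moment the algorithm first reaches the far end of $P_{k+1}$, the observations in $I_0$ and $I^*$ are identical (every probed short path is blocked in both, since $j\notin S$), so the deterministic algorithm makes the very same moves; in particular it reaches that endpoint and now discovers $P_{k+1}$ blocked, paying $2M$ for the wasted round trip. From there it is caught in the classical $k$-CTP backtracking trap on the remaining short paths: the adversary answers ``blocked'' on each probe, keeping $P_j$ for last, which costs a further $2(k-1-|S|)$ for those round trips plus a final $1$ to cross $P_j$. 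Adding the $2|S|$ already spent on the probes in $S$, the total is $2|S|+2M+2(k-1-|S|)+1=2M+2k-1$, while $OPT(I^*)=1$; with $M=\frac{2k}{\epsilon}$ this gives $r\ge \frac{4k}{\epsilon}+2k-1$, as required.

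The main obstacle is the balancing/indistinguishability step: one must choose $M$ so that consistency simultaneously (i) leaves strictly fewer than $k$ probes, guaranteeing a free path $P_j$, and (ii) makes the forced round trip on the long path cost exactly $\frac{4k}{\epsilon}\cdot OPT$ in $I^*$, and one must argue carefully that the trajectories on $I_0$ and $I^*$ coincide until after this costly commitment, so that $\pi$ can be fixed in advance while $P_j$ is read off from the algorithm's deterministic probing order. Some care is also needed to check that the two flipped edges keep $error\le 2$ and the number of blocked edges at most $k$, and to treat the boundary case $\epsilon=2k$ (where $M=1$ and the bound degrades gracefully to the $2k+1$ pure-online lower bound).
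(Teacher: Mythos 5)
Your proof is correct and takes essentially the same route as the paper's: the same graph $\mathcal{G}^*$ with $k$ unit-cost paths predicted blocked and a long path of cost $\frac{2k}{\epsilon}$ predicted open, the same consistency argument showing the algorithm must commit to the long path before probing all $k$ short paths, and the same error-$2$ instance in which the long path and every probed short path except the last are blocked, yielding cost $2(k-1)+2\cdot\frac{2k}{\epsilon}\cdot 2/2+1$ against $OPT=1$, i.e.\ ratio $2k-1+\frac{4k}{\epsilon}$. The only difference is presentational: you make explicit the simulation/indistinguishability step that lets the adversary fix the instance non-adaptively, which the paper leaves implicit in its adaptive phrasing.
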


\begin{proof}
Consider a graph $\mathcal{G}^*$ with $k+1$ paths $P_1$, $P_2$, ..., $P_k$, $P_{k+1}$, which are node-disjoint. The paths $P_1$, $P_2$,..., $P_{k-1}$, $P_{k}$ have costs equal to $1$ ($c_1 = c_2 = ... = c_k =1$) and path $P_{k+1}$ has cost $c_{k+1} = \frac{2k}{\epsilon}$. $P_1$, $P_2$,...,$P_k$ are predicted to be blocked ($k$ predicted blocked edges) and $P_{k+1}$ is feasible.

If there is no error, the predicted instance is also the real one, $P_{k+1}$ is optimal. To get a competitive ratio smaller than $1+\epsilon$ (consistency bound), a deterministic online algorithm cannot follow all paths $P_1$, $P_2$, \dots, $P_k$ before exploring $P_{k+1}$ as the ratio would be $r=\frac{2c_1}{c_{k+1}}+...+\frac{2c_k}{c_{k+1}}+1=1+\epsilon$. Therefore, $P_{k+1}$ is visited before at least one path $P_1$, $P_2$, \dots, $P_k$.

When an adversary blocks $P_{k+1}$ and all the other paths visited by the algorithm except for the last one ($k$ blocks in total), it creates a new instance with $error=2$. The optimal cost is $1$ and the algorithm now has competitive ratio:
\begin{equation*}
    r = \frac{2(k-1)+2c_{k+1}+1}{1}=2k-1+\frac{4k}{\epsilon}
\end{equation*}

Consequently, we have a Pareto lower bound  $(1+\epsilon, 2k-1+\frac{4k}{\epsilon}]$.
\end{proof}


We now give an algorithm that matches the previous lower bound.

\textsc{E-Backtrack} is formally described in Algorithm~\ref{algo_disjdecomp}. It basically executes \textsc{Backtrack}, but interrupts at some point its execution in order to explore the shortest unblocked-predicted path. The interruption point (determined by Equation~(\ref{eq:tradeoff1}) in the description of the algorithm)  is chosen sufficiently early to ensure good consistency and not too early to ensure good robustness.  

\IncMargin{1em}
\begin{algorithm}
\caption{\textsc{E-Backtrack}}\label{algo_disjdecomp}
\SetKwData{Left}{left}\SetKwData{This}{this}\SetKwData{Up}{up}
\SetKwFunction{Union}{Union}\SetKwFunction{FindCompress}{FindCompress}
\SetKwInOut{Input}{Input}\SetKwInOut{Output}{Output}
\Input{An instance of CTP with prediction with parameter $k$, $\epsilon>0$}
\Output{An $s-t$ path}
\BlankLine
$P_{pred},c_{pred} \leftarrow$ shortest path and its cost after removing all predicted blocked edges

Execute \textsc{Backtrack} and explore paths $P_1,\dots,P_j$, of cost $c_1,\dots,c_j$ until one of the following cases occurs: 

\hspace{0.4cm} (a) $t$ is reached

\hspace{0.4cm} (b) the next path $P_{j+1}$ to explore is such that:
    \begin{equation}\label{eq:tradeoff1}
            2c_1+2c_2+...+2c_j +2c_{j+1} \geq \epsilon \cdot c_{pred}
        \end{equation}

\lIf{(a) occurs}{
    Return the found path}  
\Else{

Explore $P_{pred}$ (if not yet known to be blocked)
    
    \lIf{$P_{pred}$ is not blocked}{output it}
    \lElse{Resume the execution of \textsc{Backtrack}}
}
\end{algorithm}\DecMargin{1em}

\begin{theorem}
\label{alg_det_e}
For $0<\epsilon\leq  2k$, \textsc{E-Backtrack} is a deterministic $(1+\epsilon, 2k-1+\frac{4k}{\epsilon}]$-competitive algorithm.
\end{theorem}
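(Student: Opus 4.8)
The plan is to analyze the two regimes separately, after first recording the facts about \textsc{Backtrack} that drive everything. \textsc{Backtrack} explores $s$--$t$ paths in non-decreasing order of cost $c_1\le c_2\le\cdots$; exploring a blocked path $P_i$ costs at most $2c_i$; every explored path has cost at most $OPT$ and the final feasible path it reaches has cost exactly $OPT$ (it is the cheapest path avoiding a subset of the real blockages, yet being feasible its cost is at least $OPT$); and it backtracks at most $\kappa\le k$ times. I would also unpack condition~(\ref{eq:tradeoff1}): if it first fires when the next path is $P_{j+1}$, then the previous test failed, giving the key \emph{first-time inequality} $2(c_1+\cdots+c_j)<\epsilon\,c_{pred}$, while firing itself gives $c_{pred}\le \frac2\epsilon(c_1+\cdots+c_{j+1})$.

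For consistency, suppose the prediction is correct, so the predicted and real blockages coincide; then $P_{pred}$ is feasible and $c_{pred}=OPT$. If case~(a) occurs, reaching $t$ along $P_m$, then (b) never fired, so $2(c_1+\cdots+c_m)<\epsilon c_{pred}$ and $ALG=2(c_1+\cdots+c_{m-1})+c_m<\epsilon\,OPT$. If case~(b) occurs we output the feasible $P_{pred}$, so by the first-time inequality $ALG=2(c_1+\cdots+c_j)+c_{pred}<\epsilon c_{pred}+c_{pred}=(1+\epsilon)OPT$. In both cases the ratio is strictly below $1+\epsilon$, as required for the half-open consistency bound.

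For robustness I would split on the same cases. In case~(a) the run is just \textsc{Backtrack}, which is $(2k+1)$-competitive; since $\epsilon\le 2k$ we have $\frac{4k}\epsilon\ge2$, hence $2k+1\le 2k-1+\frac{4k}\epsilon$. In case~(b) I use $c_{pred}\le\frac2\epsilon(c_1+\cdots+c_{j+1})\le\frac{2(j+1)}\epsilon OPT$. If $P_{pred}$ turns out blocked (subcase b2), the agent pays at most $2c_{pred}$ for the detour and then resumes \textsc{Backtrack} to reach $t$ at cost $OPT$; since the blockage on $P_{pred}$ is a real blockage distinct from those causing the backtracks, at most $k-1$ backtracks occur, so the \textsc{Backtrack} part costs at most $(2k-1)OPT$ and the detour at most $\frac{4k}\epsilon OPT$ (here $j+1\le k$), giving exactly the target bound — this is the tight case matching Theorem~\ref{theor_det_e}. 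If $P_{pred}$ is feasible (subcase b1) and fewer than $k$ backtracks have occurred, the estimate $ALG=2(c_1+\cdots+c_j)+c_{pred}\le 2(k-1)OPT+\frac{2k}\epsilon OPT$ already lies below the target.

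The step I expect to be delicate is subcase~(b1) when all $k$ blockages have already been revealed before the detour (so $j=k$): there the crude bound on $c_{pred}$ is too weak, and I would instead argue that $c_{pred}=OPT$. This is where the prediction budget and the path structure enter: among the $k+1$ cheapest paths, all of cost at most $OPT$, at most $k$ can carry a predicted blocked edge, so at least one is free of predicted blockages; hence $c_{pred}\le OPT$, and since the output path is feasible $c_{pred}\ge OPT$, forcing $c_{pred}=OPT$ and $ALG\le(2k+1)OPT$. The two counting arguments are thus the real obstacle rather than any computation: that a blocked $P_{pred}$ consumes one unit of the backtrack budget (producing $2k-1$ instead of $2k+1$), and that the prediction budget caps $c_{pred}$ at $OPT$ once every blockage is exposed. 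I would verify carefully that these counts survive the interleaving of the $P_{pred}$ detour with the resumed \textsc{Backtrack}, since that is precisely where a naive analysis would lose the constant.
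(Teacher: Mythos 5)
Most of your proposal tracks the paper's own proof: the first-time inequality, the consistency analysis in cases (a) and (b), the robustness bound in case (a), and your subcase (b2) (where $P_{pred}$ turns out blocked, paying $2c_{pred}$ on top of at most $(2k-1)OPT$) are exactly the paper's computations, and they are correct. The genuine gap is in the step you yourself flagged as delicate: subcase (b1) with $j=k$, where $P_{pred}$ is explored, found feasible, and output after all $k$ blockages have been revealed. Your pigeonhole---``among the $k+1$ cheapest paths at most $k$ can carry a predicted blocked edge, since there are at most $k$ predicted blocked edges''---is valid only when those paths are edge-disjoint, whereas Theorem~\ref{alg_det_e} is claimed for \emph{general} graphs (only the randomized results of the paper are restricted to path-disjoint graphs). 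In a general graph a single predicted blocked edge can lie on all of $P_1,\dots,P_{k+1}$ simultaneously, so the count gives nothing and $c_{pred}$ need not equal $OPT$. Concretely, take $k=1$, $\epsilon=2$: vertices $s,a,t$; edge $e^*=(s,a)$ of cost $1$, predicted blocked but actually open; two parallel edges $(a,t)$ of costs $0.01$ (blocked, not predicted) and $0.02$ (open; subdivide if parallel edges bother you); and an edge $(s,t)$ of cost $2.02$ (open, not predicted). Then $c_{pred}=2.02$, $c_1=1.01$, $c_2=1.02$; the first test $2c_1=2.02<4.04=\epsilon c_{pred}$ fails, so $P_1$ is explored (cost $2$), then $2c_1+2c_2=4.06\geq 4.04$ fires condition~(\ref{eq:tradeoff1}), $P_{pred}$ is explored, found feasible, and output. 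The total cost is $4.02$ while $OPT=1.02$, a ratio of about $3.94>3=2k-1+\frac{4k}{\epsilon}$. In this configuration the ratio can approach $2k+\frac{2(k+1)}{\epsilon}$, which exceeds the claimed bound whenever $\epsilon>2k-2$, so no alternative argument can rescue this subcase on general graphs.

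You should know that this is not a defect of your write-up alone, and in a sense you found a real soft spot: the paper's own proof silently skips this very case. Its cost formula $ALG=2\sum_{i=1}^{t-1}c_i+2c_{pred}+c_t$ and the chain $c_{pred}\leq 2\sum_{i=1}^{j+1}c_i/\epsilon\leq 2\sum_{i=1}^{t}c_i/\epsilon\leq 2kc_t/\epsilon$ both presuppose $t\geq j+1$, i.e., that $P_{pred}$ was found blocked and \textsc{Backtrack} was resumed ending at a path of cost $OPT$; the subcase ``$P_{pred}$ feasible but suboptimal'' is never treated, and it is precisely the one that misbehaves. Your argument does legitimately close this case on path-disjoint graphs (there, $c_{pred}>OPT$ would force $k+1$ distinct predicted blocked edges, a contradiction), so your proof is correct in that restricted setting; but as a proof of the theorem as stated---general graphs, all $0<\epsilon\leq 2k$---it has a genuine gap, and the example above indicates the gap cannot be filled without restricting the graph class, restricting $\epsilon\leq 2k-2$, or modifying \textsc{E-Backtrack}.
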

\begin{proof}
We denote by $ALG$ the cost of algorithm \textsc{E-Backtrack}.

Suppose first that case $(a)$ occurs. Then $j\leq k+1$ as there are at most $k$ blocked edges, and $2c_1+\cdots+2c_j< \epsilon\cdot c_{pred}$ (otherwise case $(b)$ would have occurred earlier). In the case where the prediction is correct, $OPT=c_{pred}$ and, using 
$\epsilon\leq 2k$:
$$ALG\leq 2c_1+\dots+2c_{j-1}+c_j< \epsilon\cdot c_{pred}<(1+\epsilon)c_{pred}$$ 
If the prediction is not correct, then $OPT=c_j$ and:  $$ALG\leq 2c_1+\dots+2c_{j-1}+c_j\leq (2k+1)OPT \leq (2k-1+4k/\epsilon) OPT$$

Suppose now that case $(b)$ occurs. As explained earlier $2c_1+\cdots+2c_j< \epsilon\cdot c_{pred}$. In the case where the prediction is correct, $OPT=c_{pred}$ and: 
$$ALG\leq 2c_1+\dots+2c_{j-1}+2c_j+c_{pred} <(1+\epsilon)c_{pred}$$

In the case where the prediction is not correct, if $P_{pred}$ were already known to be blocked, then we directly get a ratio $2k+1\leq 2k-1+4k/\epsilon$. Otherwise, let $P_1,\dots,P_j,P_{pred},P_{j+1},\dots,P_t$ be the paths explored by \textsc{E-Backtrack}. As there are at most $k$ blocked edges, $t\leq k$ (note that the exploration of $P_{pred}$ does give a previously unknown blocked edge). Moreover, $c_1\leq c_2\leq \dots \leq c_t=OPT$. We get: \begin{equation}\label{eq:tradeoff2}
ALG= 2\sum_{i=1}^{t-1}c_i+2c_{pred}+c_t\leq  (2k-1)OPT+2c_{pred}\end{equation}
Using~(\ref{eq:tradeoff1}) we know that $c_{pred}\leq 2\sum_{i=1}^{j+1}c_i/\epsilon\leq 2\sum_{i=1}^{t}c_i/\epsilon\leq 2k c_{t}/\epsilon $.
Then Equation~(\ref{eq:tradeoff2}) gives:
\begin{equation*}
ALG\leq   \left(2k-1+\frac{4k}{\epsilon}\right)OPT\end{equation*}
\end{proof}

\subsection{Randomized bounds and algorithms}\label{subsec:tradeoffrand}

We now consider the randomized case. As explained in the introduction, we restrict ourselves to the path-disjoint graphs for randomized algorithms. 

\begin{theorem}
\label{theor_rand_e}
Any randomized 
algorithm that achieves competitive ratio at most $1+\epsilon$ when the prediction is correct, achieves a ratio of at least $k+\frac{k}{\epsilon}$ when the prediction is not correct, even when the error is at most $2$ and the graph is path-disjoint.
\end{theorem}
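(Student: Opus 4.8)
The plan is to adapt the deterministic construction to a Yao-style averaging argument on $\mathcal{G}^*$. I would keep the $k+1$ node-disjoint paths with $c_1=\cdots=c_k=1$, set $c_{k+1}=k/\epsilon$ (the value I justify below), and use the same prediction: $P_1,\dots,P_k$ predicted blocked and $P_{k+1}$ predicted open. The first step is a structural reduction. Since the paths are node-disjoint and a path reveals whether it is blocked only when the agent reaches its blocking edge, every ``blocked'' response is indistinguishable to the agent; hence a deterministic algorithm is completely described by the order $\sigma$ in which it attempts the paths, stopping at the first one found open, with a failed attempt on $P_i$ costing $2c_i$ and a successful one costing $c_i$. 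A randomized algorithm is then a distribution $\mathcal{D}$ over permutations $\sigma$ of $\{1,\dots,k+1\}$.

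Second, I would translate consistency into a constraint on $\mathcal{D}$. On the error-free instance $I_0$ (prediction exact, so only $P_{k+1}$ open), the cost of $\sigma$ is $2(\mathrm{pos}(\sigma)-1)+c_{k+1}$, where $\mathrm{pos}(\sigma)$ is the position of $P_{k+1}$ in $\sigma$ (all paths before it are short). Consistency $\mathbb{E}_\sigma[ALG(I_0)]\le (1+\epsilon)c_{k+1}$ then gives, by linearity, $\mathbb{E}_\sigma[\mathrm{pos}(\sigma)]\le 1+\epsilon c_{k+1}/2 = 1+k/2$.

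Third comes the averaging step. For $j\in\{1,\dots,k\}$, let $I_j$ be the error instance in which $P_j$ is open and all other paths (the remaining short ones together with $P_{k+1}$) are blocked; each $I_j$ has error exactly $2$ and $OPT=1$. For fixed $\sigma$, $ALG_\sigma(I_j)=1+2\sum_{i:\,P_i\prec_\sigma P_j}c_i$, and averaging over $j$ with a double-count (each path contributes twice its cost for every short path placed after it) yields $\frac1k\sum_{j=1}^k ALG_\sigma(I_j)=k+\frac{2c_{k+1}(k-\mathrm{pos}(\sigma)+1)}{k}$, which depends on $\sigma$ only through $\mathrm{pos}(\sigma)$. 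Taking $\mathbb{E}_\sigma$ and inserting the consistency bound $\mathbb{E}[\mathrm{pos}]\le 1+k/2$ (the averaged cost is decreasing in $\mathrm{pos}$, so the cap is worst for us) gives expected cost at least $k+c_{k+1}=k+k/\epsilon$. Since at least one $I_j$ attains the average, some error instance has ratio $\ge k+k/\epsilon$.

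Finally, I would record the two parameter checks. The choice $c_{k+1}=k/\epsilon$ is exactly the maximizer of $f(C)=k+2C-\epsilon C^2/k$, the expression obtained after substituting the consistency cap into the averaged cost; and the range $0<\epsilon\le k$ guarantees both $c_{k+1}\ge 1$ (so $P_{k+1}$ is a genuine long path) and $1+k/2\le k+1$ (so the cap is attainable by a valid distribution). The \textbf{main obstacle} is the structural reduction to permutations, namely making precise that on $\mathcal{G}^*$ neither adaptivity nor randomness helps before the first open path is reached, so that the consistency requirement becomes a single linear constraint on $\mathbb{E}[\mathrm{pos}]$; once this and the double-sum bookkeeping are in place, the optimization over $C$ is a one-line calculus check.
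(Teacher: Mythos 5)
Your proof is correct and takes essentially the same route as the paper's: the same graph $\mathcal{G}^*$ with $c_1=\cdots=c_k=1$, $c_{k+1}=k/\epsilon$ and the same prediction, the same consistency constraint (your $\mathbb{E}[\mathrm{pos}]\le 1+k/2$ is exactly the paper's inequality $\sum_{\ell} \ell\, p_{\ell+1}\le k/2$), and the same uniform averaging over the $k$ error-$2$ instances giving expected cost at least $k+c_{k+1}=k+\frac{k}{\epsilon}$. Your only additions are presentational: you state the reduction of adaptive deterministic strategies to permutations explicitly (the paper uses it implicitly), and you justify the choice $c_{k+1}=k/\epsilon$ as the maximizer of $k+2C-\epsilon C^2/k$, which the paper simply posits.
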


\begin{proof}
Consider a graph $\mathcal{G}^*$ with $k+1$ paths $P_1$, $P_2$, ..., $P_k$, $P_{k+1}$, which are node-disjoint. The paths $P_1$, $P_2$,..., $P_{k-1}$, $P_{k}$ have costs equal to $1$ ($c_1 = c_2 = ... = c_k =1$) and path $P_{k+1}$ has cost $c_{k+1} = \frac{k}{\epsilon}$. $P_1$, $P_2$,...,$P_k$ are predicted to be blocked ($k$ predicted blocked edges) and $P_{k+1}$ is feasible.

If there is no error, the predicted instance is also the real one, $P_{k+1}$ is optimal ($OPT=c_{k+1}=\frac{k}{\epsilon}$) and the competitive ratio must be at most $1+\epsilon$. In the above instance, any deterministic algorithm can achieve one of the following competitive ratios:

\begin{itemize}
    \item $r = 1$, when choosing only $P_{k+1}$.
    
    \item $r = \frac{2 \epsilon}{k}+1$, when choosing paths $P_i$, $P_{k+1}$ with $i\ne k+1$.
    
    \item $r = \frac{4 \epsilon}{k}+1$, when choosing paths $P_i$, $P_j$, $P_{k+1}$ with $i, j\ne k+1$ and $i\ne j$.
    
    \item  \quad \enspace .\enspace.\enspace.
   
    \item $r = \frac{2k\cdot \epsilon}{k}+1$, when choosing all $k$ paths $P_1$,..., $P_k$ and then $P_{k+1}$.   
\end{itemize} 

A randomized algorithm can be viewed as a probability distribution over all deterministic algorithms. We assume that an arbitrary randomized algorithm chooses with cumulative probability $p_1$ the deterministic algorithms that achieve a ratio of $1$ (here there is only one algorithm), with cumulative probability $p_2$ the deterministic algorithms that achieve a ratio of $\frac{2 \epsilon}{k}+1$, and so on. We also have that:
\begin{equation}
\label{eq5}
    \sum_{i=1}^{k+1}p_i=1
\end{equation}

Hence, the competitive ratio of an arbitrary randomized algorithm is:
\begin{equation*}
    r = p_1\cdot 1 + p_2  \bigg( \frac{2 \epsilon}{k}+1 \bigg) + p_3 \bigg( \frac{4 \epsilon}{k}+1 \bigg) + ... + 
    p_{k+1} \bigg( \frac{2k\cdot \epsilon}{k}+1 \bigg)
\end{equation*}

Since $r \leq 1+\epsilon$, we have that:
\begin{equation*}
    p_1 + p_2 \bigg( \frac{2 \epsilon}{k}+1 \bigg) + p_3 \bigg( \frac{4 \epsilon}{k}+1 \bigg) + ... + 
    p_{k+1} \bigg( \frac{2k\cdot \epsilon}{k}+1 \bigg) \leq 1 + \epsilon
\end{equation*}
\begin{equation*}
    \Rightarrow \sum_{i=1}^{k+1}p_i + p_2 \cdot \frac{2 \epsilon}{k}  + p_3 \cdot  \frac{4 \epsilon}{k} + ... + 
    p_{k+1} \cdot \frac{2k\cdot \epsilon}{k} \leq 1 + \epsilon
\end{equation*}
From (\ref{eq5}) it follows that:
\begin{equation*}
   p_2 \cdot  \frac{2 \epsilon}{k}  + p_3 \cdot  \frac{4 \epsilon}{k} + ... + 
    p_{k+1} \cdot \frac{2k\cdot \epsilon}{k} \leq  \epsilon
\end{equation*}
\begin{equation}
   \Rightarrow p_2  + 2 p_3 + ... + 
    k\cdot p_{k+1}  \leq  \frac{k}{2} \label{eq:brpotr}
\end{equation}

We now look at the case that the prediction is wrong. Consider the (randomized) set of instances where 
 the path $P_i$ is unblocked, where $i$ is chosen uniformly at random in $\{1,...,k \}$, and path $P_{k+1}$ is blocked. Note that these instances have $error=2$.
So, only path $P_i$ is feasible with $cost=1$ and $OPT=1$. Consider a deterministic algorithm which explores (until it finds an unblocked path) $\ell\geq 0$ paths among $P_1,\dots,P_k$, then $P_{k+1}$, then the remaining paths among the first $k$. On the previously given randomized set of instances, it will explore $P_{k+1}$ with probability $(1-\ell /k)$, and will find the unblocked path after exactly $t$ explorations with probability $1/k$ (for any $t$). Thus, the expected cost of such an algorithm on the considered randomized set of instances is: $$\mathbb{E}_\ell=\left(1-\frac{\ell}{k}\right)2c_{k+1}+ \frac{(1+3+\dots+(2k-1))}{k}=\left(1-\frac{\ell}{k}\right)2c_{k+1}+k$$
Then, the expected cost of the randomized algorithm (which chooses such an algorithm with probability $p_{\ell+1}$) on the given randomized set of instances verifies:
\begin{equation*}
\mathbb{E}[ALG]\geq \sum_{\ell=0}^k p_{\ell+1}\mathbb{E}_\ell=\sum_{\ell=0}^k p_{\ell+1} \left(\left(1-\frac{\ell}{k}\right)2c_{k+1}+k\right)  
\end{equation*}
\begin{equation*}
= 2c_{k+1}\left(1-\frac{\sum_{\ell=0}^kp_{\ell+1}\ell}{k}\right)+k 
\end{equation*}

Equation~(\ref{eq:brpotr}) gives $\sum_{\ell=0}^kp_{\ell+1} \ell \leq k/2$, so we have $\mathbb{E}[ALG]\geq  c_{k+1}+k=k+\frac{k}{\epsilon}$.
\end{proof}


We now give a randomized algorithm that is $[1+\epsilon, k+\frac{4k}{\epsilon}]$-competitive. Similarly as \textsc{E-Backtrack}, it executes \textsc{RandBacktrack} but interrupts at some point (determined by Equation (\ref{eq:tradeoff3})) its execution in order to explore the shortest unblocked-predicted path.

\begin{theorem}
\label{alg_rand_e}
For $0<\epsilon\leq  k$, \textsc{E-RandBacktrack} is a randomized $[1+\epsilon, k+\frac{4k}{\epsilon}]$-competitive algorithm.
\end{theorem}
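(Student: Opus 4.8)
The plan is to mirror the analysis of Theorem~\ref{alg_det_e}, replacing the role of \textsc{Backtrack} (which is $(2k+1)$-competitive) by \textsc{RandBacktrack} (which is $(k+1)$-competitive on path-disjoint graphs), and to treat the two regimes—prediction correct and prediction wrong—separately. Throughout, let $P_{pred}$ be the shortest predicted-unblocked path and $c_{pred}$ its cost; note that any path strictly cheaper than $P_{pred}$ must be predicted blocked. The algorithm runs \textsc{RandBacktrack} until the interruption rule~\eqref{eq:tradeoff3} fires, then tries $P_{pred}$, and only resumes the backtracking recursion if $P_{pred}$ turns out to be blocked.

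For consistency, assume the prediction is correct, so $OPT=c_{pred}$ and every path cheaper than $P_{pred}$ is genuinely blocked. Hence each path that \textsc{RandBacktrack} explores before the interruption is blocked, and $P_{pred}$ is found to be free on the very next attempt. The rule~\eqref{eq:tradeoff3} is chosen precisely so that the (expected) cost accumulated by the backtracking phase before $P_{pred}$ is explored is at most $\epsilon\cdot c_{pred}$; adding the final traversal of $P_{pred}$ of cost $c_{pred}$ gives $\mathbb{E}[ALG]\le \epsilon\,c_{pred}+c_{pred}=(1+\epsilon)\,OPT$. The hypothesis $\epsilon\le k$ is what guarantees the threshold is actually reached (the at most $k$ cheaper blocked paths suffice to trigger it), so that $P_{pred}$ is always tried.

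For robustness, assume the prediction is wrong. If $P_{pred}$ is already known to be blocked when the interruption fires, then \textsc{E-RandBacktrack} simply runs \textsc{RandBacktrack} to completion and inherits its ratio $k+1\le k+\frac{4k}{\epsilon}$. Otherwise, exploring $P_{pred}$ adds at most $2c_{pred}$ to the cost; the crucial point, exactly analogous to the bound $t\le k$ in the deterministic proof, is that the failed traversal of $P_{pred}$ reveals a previously unknown blocked edge, so the residual backtracking faces at most $k-1$ further blockages among its candidate set and therefore contributes expected cost at most $k\cdot OPT$ rather than $(k+1)\cdot OPT$. Combining, $\mathbb{E}[ALG]\le k\cdot OPT+2c_{pred}$. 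Finally, the interruption rule~\eqref{eq:tradeoff3} forces $c_{pred}\le \frac{2k}{\epsilon}\,OPT$ (the accumulated path cost that triggered the interruption is at least $\epsilon\,c_{pred}$ and at most $2k\cdot OPT$), whence $2c_{pred}\le \frac{4k}{\epsilon}\,OPT$ and $\mathbb{E}[ALG]\le\big(k+\frac{4k}{\epsilon}\big)OPT$.

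I expect the main obstacle to be the randomized bookkeeping in the robustness case: because \textsc{RandBacktrack} explores paths in a random order, both the interruption point and the set of paths still unexplored when $P_{pred}$ is inserted are random, so one must argue in expectation that (i) inserting the single extra path $P_{pred}$ costs at most $2c_{pred}$ on average and (ii) the residual recursion genuinely behaves like \textsc{RandBacktrack} against one fewer blockage, preserving the $(k+1)\to k$ improvement. Making (ii) rigorous requires checking that conditioning on the exploration having reached the interruption does not distort \textsc{RandBacktrack}'s distribution over the remaining candidate paths; the remaining steps—the consistency computation and the bound on $c_{pred}$ from~\eqref{eq:tradeoff3}—are then routine.
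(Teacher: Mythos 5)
Your consistency analysis matches the paper's and is essentially fine (up to a minor inaccuracy: when the prediction is correct, the paths explored by \textsc{RandBacktrack} need not all be blocked --- the run may end at an unblocked $P_i$ with $c_i\geq c_{pred}$ --- but then the total cost is still at most $\epsilon\cdot c_{pred}$ by the interruption rule, so the $(1+\epsilon)$ bound holds). The robustness analysis, however, has a genuine gap, and it is not the bookkeeping issue you flag at the end. Your final step asserts that whenever rule~(\ref{eq:tradeoff3}) fires, the triggering quantity $\textsc{TVL}+2c_{new}$ is at most $2k\cdot OPT$, hence $c_{pred}\leq\frac{2k}{\epsilon}OPT$. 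This transfer from the deterministic proof is invalid: \textsc{Backtrack} explores paths in increasing cost order, so every cost in that sum is at most $OPT$; \textsc{RandBacktrack} does not, and it may explore blocked candidate paths far more expensive than $OPT$ before the interruption. Concretely, take $k=2$, $\epsilon=1$, candidates $P_1$ (cost $1$, unblocked, so $OPT=1$) and $P_2$ (cost $1.5$, blocked), and $P_{pred}$ blocked with $c_{pred}=4.5$. Any $2$-competitive randomized strategy on $\{P_1,P_2\}$ must try $P_2$ first with probability at least $1/4$ (otherwise it fails on the instance where $P_1$ is blocked and $P_2$ is not), and on that branch the rule fires with $\textsc{TVL}+2c_{new}=5>\epsilon\cdot c_{pred}=4.5$, while $2k\cdot OPT=4$ and $\frac{2k}{\epsilon}OPT=4<c_{pred}$. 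So your (correct but too weak) inequality $\mathbb{E}[ALG]\leq k\cdot OPT+2c_{pred}$ cannot be closed to the claimed bound.

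The missing idea is that the detour cost $2c_{pred}$ must be charged only on the event that case $(c)$ occurs, and that this event is rare precisely when $c_{pred}$ is large. The paper couples the interrupted execution with the uninterrupted run of \textsc{RandBacktrack} (of cost $A_{k-1}$), so that pointwise $ALG\leq A_{k-1}+2c_{pred}\cdot\mathbf{1}[(c)\text{ occurs}]$, hence $\mathbb{E}[ALG]\leq\mathbb{E}[A_{k-1}]+2c_{pred}\cdot Pr(c)$; it then observes that case $(c)$ forces $A_{k-1}>\epsilon\cdot c_{pred}/2$ (Observation 3) and applies Markov's inequality to get $Pr(c)\leq 2\mathbb{E}[A_{k-1}]/(\epsilon\cdot c_{pred})$, so the detour term is at most $4\mathbb{E}[A_{k-1}]/\epsilon$. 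Since in the robustness case the optimal path must lie among $P_1,\dots,P_k$ (otherwise all of them would be blocked and, by Observation 2, $P_{pred}$ would be unblocked and optimal, contradicting this case), we have $\mathbb{E}[A_{k-1}]\leq k\cdot OPT$, which yields the bound $\left(k+\frac{4k}{\epsilon}\right)OPT$. Note that this coupling also dissolves your concerns (i) and (ii): no conditioning of \textsc{RandBacktrack}'s distribution is needed, and one never needs the resumed recursion to behave like ``\textsc{RandBacktrack} against one fewer blockage'' --- the whole run is analyzed as a single realization of $A_{k-1}$ plus an indicator-weighted detour.
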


\begin{proof}

We denote by $ALG$ the cost of algorithm 
\textsc{E-RandBacktrack} and by $A_{k-1}$ the cost of \textsc{RandBacktrack} (both $ALG$ and $A_{k-1}$ are random variables).  

The proof is based on the following observations.

\vspace{0.2cm}

\noindent {\it Observation 1. At the time of the algorithm when $(a)$, $(b)$ or $(c)$ occurs, $TVL \leq \epsilon \cdot c_{pred}$. In particular, if $(a)$ or $(b)$ occurs, $A_{k-1}\leq \epsilon \cdot c_{pred}$.}

\vspace{0.2cm}

\noindent Indeed, otherwise case $(c)$ would have occurred earlier. 

\vspace{0.2cm}

\noindent{\it Observation 2. If $P_1,\dots,P_k$ are blocked, then $P_{pred}$ is unblocked and optimal. In particular, if $(b)$ occurs then $P_{pred}$ is unblocked and optimal.}

\vspace{0.2cm}

\noindent Indeed, if  $P_1,\dots,P_k$ are blocked, there is no other blocked path (as $k$ upper bounds the number of blocked paths) so $P_{pred}$ is unblocked. If the set of $k$ blocked edges are exactly the predicted ones, then $P_{pred}$ is by definition optimal. Otherwise, one path $P_i$ is not predicted to be blocked (as $k$ upper bounds the number of predicted blocked paths), hence $c_{pred}\leq c_i\leq c_k$. But by definition of paths $P_1,\dots,P_k$, if they are all blocked then $OPT\geq c_k$, so again $OPT = c_{pred}$.

\IncMargin{1em}
\begin{algorithm}[H]
\SetKwData{Left}{left}\SetKwData{This}{this}\SetKwData{Up}{up}
\SetKwFunction{Union}{Union}\SetKwFunction{FindCompress}{FindCompress}
\SetKwInOut{Input}{Input}\SetKwInOut{Output}{Output}
\Input{An instance of CTP with prediction with parameter $k$, $\epsilon>0$}
\Output{An $s-t$ path}
\BlankLine
$P_{pred},c_{pred} \leftarrow$ shortest path and its cost after removing all predicted blocked edges

$P_1,\dots,P_k$ of cost $c_1,\dots,c_k \leftarrow$ $k$ shortest paths except for $P_{pred}$ \footnotemark[2]

$\textsc{TVL} \leftarrow$ total visited length of \textsc{RandBacktrack} before exploring the next path

Execute \textsc{RandBacktrack} on paths $P_1,\dots,P_k$ with parameter $k-1$ until one of the following cases occurs: 

\hspace{0.4cm} (a) $t$ is reached

\hspace{0.4cm} (b) $t$ is not reached and \textsc{RandBacktrack} terminates

\hspace{0.4cm} (c) the next path $P_{new}$ of cost $c_{new}$ to explore is such that:
    \begin{equation}\label{eq:tradeoff3}
            \textsc{TVL} +2c_{new} > \epsilon \cdot c_{pred}
        \end{equation}

\lIf{(a) occurs}{
    Return the found path} 
\ElseIf{(b) occurs}{
    Explore $P_{pred}$ and output it
}
\Else{
    Explore $P_{pred}$ 
    
    \lIf{$P_{pred}$ is not blocked}{output it}
    \lElse{Resume the execution of \textsc{RandBacktrack}}
}

\caption{\textsc{E-RandBacktrack}}\label{rand_algo_disjdecomp}
\end{algorithm}\DecMargin{1em}

\footnotetext[2]{If the graph contains less than $k$ disjoint paths, then choose the maximum number of paths $l < k$ and run \textsc{RandBacktrack} with parameter $l-1$. The analysis remains the same.}

\vspace{0.2cm}

\noindent{\it Observation 3. If $(c)$ occurs, then $A_{k-1}>\frac{\epsilon \cdot c_{pred}}{2}$.}

\vspace{0.2cm}

\noindent Indeed, when $(c)$ occurs \textsc{RandBacktrack} has cost at least $TVL +c_{new}>\epsilon \cdot c_{pred}/2$.\\

Then, suppose first that $P_{pred}$ is unblocked and optimal. Following observation 1, if $(a)$ or $(b)$ occurs we have $ALG\leq A_{k-1}+c_{pred}\leq (1+\epsilon)c_{pred}$, and in case $(c)$ also $ALG\leq(1+\epsilon)c_{pred}$. So anyway $ALG\leq (1+\epsilon)OPT$, and in particular $\mathbb{E}[ALG]\leq (1+\epsilon)OPT$. So \textsc{E-RandBacktrack} is $(1+\epsilon)$-competitive when there is no error. As $(1+\epsilon)\leq k+4k/\epsilon$ ($\epsilon\leq k$), the robustness bound is also verified in this case.

Now, suppose that we are in the other case, i.e.,  $P_{pred}$ is either blocked, or unblocked but not optimal.  Note that the prediction is not correct here. Following Observation 2, $(b)$ cannot occur, and one path in $P_1\dots,P_k$ is unblocked (so \textsc{RandBacktrack} does find a path before terminating). Then in case $(a)$ $ALG\leq A_{k-1}$, and in case $(c)$, anyway, $ALG\leq A_{k-1}+2c_{pred}$. So we get:
\begin{equation}
    \mathbb{E}[ALG]\leq \mathbb{E}[A_{k-1}]+2c_{pred}\cdot Pr(c)\label{equtoto}
\end{equation}    
    \noindent where $Pr(c)$ denotes the probability that case $c$ occurs. Following Observation 3, if $(c)$ occurs $A_{k-1}>\epsilon \cdot c_{pred}/2$. Using Markov Inequality, we have:
\begin{equation}
Pr(c)\leq Pr\left(A_{k-1}>\frac{\epsilon \cdot c_{pred}}{2}\right)\leq \frac{2\mathbb{E}[A_{k-1}]}{\epsilon \cdot c_{pred}} \label{equtoto2}
\end{equation}
Using Equations~(\ref{equtoto}) and (\ref{equtoto2}) we get $\mathbb{E}[ALG]\leq \mathbb{E}[A_{k-1}]+\frac{4\mathbb{E}[A_{k-1}]}{\epsilon}$. As (at least) one path is unblocked in $P_1,\dots,P_k$, $\mathbb{E}[A_{k-1}]\leq k \cdot OPT$, and the result follows. 

\end{proof}

The guarantee provided by \textsc{E-RandBacktrack} (Theorem \ref{alg_rand_e}) does not exactly match the lower bound of Theorem~\ref{theor_rand_e}. While closing this gap is left as an open question, we conjecture that the exact (optimal) tradeoff corresponds to the lower bound $[1+\epsilon,k+\frac{k}{\epsilon}]$ (for path-disjoint graphs). Towards this conjecture, we present two cases where we have an upper bound that matches  the $[1+\epsilon,k+\frac{k}{\epsilon}]$-lower bound. The two special cases (proved in Appendices~\ref{app:ubk=1} and~\ref{app:rbu}) are $k=1$ (Theorem~\ref{th:ubk=1}) and the case of uniform costs (Theorem~\ref{th:RBU}).  

For the case $k=1$, the algorithm (fully described in Appendix~\ref{app:ubk=1}) is a modified version of \textsc{RandBacktrack} with different probabilities, specifically settled exploiting the fact that there is (at most) one blocked edge and one predicted blocked edge. 

For the case of uniform cost, the algorithm (fully described in  Appendix~\ref{app:rbu}) explores at first a path with an appropriate probability. If the path is blocked, it then executes \textsc{RandBacktrack} on the remaining graph. 

\begin{theorem}\label{th:ubk=1}
For $0<\epsilon\leq  1$, there exists a randomized $[1+\epsilon, 1+\frac{1}{\epsilon}]$-competitive algorithm when the graph is path-disjoint and $k=1$. 
\end{theorem}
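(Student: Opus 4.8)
The plan is to reduce the instance to a randomized choice between two candidate paths and then tune a single probability so that both guarantees are tight on the extremal instances.

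First I would pin down the two relevant paths. Since $k=1$, at most one $s$–$t$ path is flagged blocked by the prediction, so $P_{pred}$ (the shortest path avoiding the predicted blocked edge, of cost $a:=c_{pred}$) is either the globally cheapest path, or the second cheapest one (namely when the predicted blocked path is itself the globally cheapest). Letting $P'$ be the cheapest $s$–$t$ path different from $P_{pred}$, of cost $b:=c(P')$, it follows that $\{P_{pred},P'\}$ are exactly the two cheapest paths of the graph (removing the single flagged path demotes the cheapest by at most one position). Because the graph is path-disjoint and at most one edge is really blocked, at most one of $P_{pred},P'$ is blocked, hence at least one is feasible; moreover the optimum is attained inside $\{P_{pred},P'\}$, and equals $a$ whenever the prediction is correct (the cheapest feasible path is then $P_{pred}$, even in the case $a>b$ where the cheaper $P'$ is the truly blocked predicted path). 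So I may restrict attention to the \textsc{RandBacktrack}-style scheme that, with probability $\beta$, traverses $P_{pred}$ first (and $P'$ after backtracking if $P_{pred}$ is blocked), and with probability $1-\beta$ traverses $P'$ first. If the graph has a single $s$–$t$ path, it is trivially feasible and the ratio is $1$.

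The crux is choosing $\beta$, and I would take $\beta=\max\{0,\,1-\tfrac{\epsilon a}{2b}\}$. There are three possible realities: (i) both paths feasible, (ii) $P_{pred}$ feasible and $P'$ blocked, (iii) $P_{pred}$ blocked and $P'$ feasible. Writing down the expected cost in each case, namely $a+2(1-\beta)b$ in (ii), $b+2\beta a$ in (iii), and $\beta a+(1-\beta)b$ in (i), yields three ratios. When the prediction is correct one is always in case (ii), or in case (i) with $a\le b$, with $OPT=a$; substituting $\beta$ gives ratio exactly $1+\epsilon$ in case (ii) and at most $1+\tfrac{\epsilon}{2}$ in case (i), which settles consistency. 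For robustness I would check all three ratios against $1+\tfrac1\epsilon$ for every cost ratio $t:=b/a$, recalling $0<\epsilon\le 1$.

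The hard part will be this robustness verification, because the binding scenario flips with the sign of $a-b$: case (iii) dominates when $P_{pred}$ is the cheap path and case (i) dominates when $P_{pred}$ is the expensive one. After substituting $\beta$ each required inequality collapses to something elementary: the case-(iii) bound $\beta\le \tfrac{b}{2a\epsilon}$ reduces to $(t-\epsilon)^2\ge 0$, and the case-(i) bound $\beta(a-b)\le b/\epsilon$ (relevant when $a>b$) reduces to a one-variable inequality that I would verify holds on $0<t<1$, $0<\epsilon\le 1$, with the clamp $\beta=0$ taking over precisely when $a>2b/\epsilon$, where traversing the cheap path $P'$ first already forces ratio $1$ in the dangerous scenarios and ratio $1+2t\le 1+\epsilon$ in case (ii). Finally I would observe that both extremal inequalities are attained on the graph $\mathcal{G}^*$ of Theorem~\ref{theor_rand_e} specialized to $k=1$ (two node-disjoint paths of costs $\tfrac1\epsilon$ and $1$, with $\beta=\tfrac12$), so the guarantee $[1+\epsilon,\,1+\tfrac1\epsilon]$ matches the lower bound and is best possible.
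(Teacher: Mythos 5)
Your proposal is correct and is essentially the paper's own proof: your two-path scheme with first-exploration probability $\beta=\max\{0,\,1-\tfrac{\epsilon a}{2b}\}$ is exactly the paper's \textsc{RandBacktrackOne} (whose $p_{pred}=\tfrac{2c_1-\epsilon c_{pred}}{2c_1}$ equals your unclamped $\beta$, with the deterministic choice of $P_1$ when $c_{pred}>\tfrac{2}{\epsilon}c_1$ playing the role of your clamp), and your key robustness inequality $(t-\epsilon)^2\ge 0$ is the paper's $(c_1-\epsilon c_{pred})^2\ge 0$. The only cosmetic difference is that you split by feasibility patterns (i)--(iii) rather than by which path is optimal, which forces the extra case-(i) check for $a>b$; that check is anyway dominated by case (iii), since $\beta a+(1-\beta)b\le b+2\beta a$.
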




\begin{theorem}\label{th:RBU}
For $0<\epsilon\leq  k$, there exists a randomized $[1+\epsilon, k+\frac{k}{\epsilon}]$-competitive algorithm when the graph is path-disjoint and the costs are uniform.
\end{theorem}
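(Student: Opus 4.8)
The plan is to design a randomized algorithm that explores the $s-t$ paths in a random order, biasing the predicted path $P_{pred}$ to be tried early, while exploiting that with uniform costs \emph{every} unblocked path is optimal. Normalize so that each path has cost $1$; then $OPT=1$ whenever at least one path is unblocked, and the cost of exploring paths until the first unblocked one is reached equals $2(\text{position of the first unblocked path})-1$. I will assume the graph contains $k+1$ node-disjoint paths (the case of fewer paths is handled as in the footnote of Algorithm~\ref{rand_algo_disjdecomp}, with $k$ replaced by the number of paths minus one). Among them, let $P_{pred}$ be a path predicted to be unblocked, and let $P_1,\dots,P_k$ be the others.

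The algorithm I propose is: with probability $\alpha=1-\frac{\epsilon}{k}$, first explore $P_{pred}$, and if it is blocked run \textsc{RandBacktrack} on $P_1,\dots,P_k$; with probability $\frac{\epsilon}{k}$, run \textsc{RandBacktrack} directly on all $k+1$ paths. Since all costs are uniform, \textsc{RandBacktrack} reduces to exploring the paths in a uniformly random order, which is exactly $(k+1)$-competitive on equal-cost node-disjoint paths. Consequently the whole algorithm amounts to exploring the $k+1$ paths in a random order in which the position $J$ of $P_{pred}$ satisfies
\begin{equation*}
\mathbb{E}[J]=\alpha\cdot 1+(1-\alpha)\tfrac{k+2}{2}=1+\tfrac{\epsilon}{2}=:\mu ,
\end{equation*}
while the remaining paths occupy the other positions in uniformly random relative order. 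The crucial simplification is that, because the $P_i$ play symmetric roles, every quantity I need depends on the order only through $\mu$: by linearity the expected position of any fixed $P_i$ equals $\frac{(k+1)(k+2)/2-\mu}{k}$.

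For consistency I take the worst correct-prediction instance, where all of $P_1,\dots,P_k$ are blocked and $P_{pred}$ is the unique unblocked path; then the cost is $2J-1$ and $\mathbb{E}[ALG]=2\mu-1=1+\epsilon=(1+\epsilon)OPT$. For robustness the main step is to identify the worst instance. Since $P_{pred}$ is biased to the front, i.e. $\mu\le\frac{(k+1)(k+2)/2-\mu}{k}$ (equivalently $\mu\le\frac{k+2}{2}$, which holds because $\epsilon\le k$), an adversary maximizes the hitting position by blocking $P_{pred}$ together with $k-1$ of the $P_i$, leaving a single $P_i$ unblocked; blocking fewer paths only makes the first unblocked path appear earlier. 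Using the symmetric formula, the expected cost on such an instance is
\begin{equation*}
2\cdot\frac{(k+1)(k+2)/2-\mu}{k}-1=k+2-\frac{\epsilon}{k}.
\end{equation*}
It then remains to verify this is within the claimed bound, i.e. $k+2-\frac{\epsilon}{k}\le k+\frac{k}{\epsilon}$; writing $x=\frac{\epsilon}{k}\in(0,1]$ this reads $2-x\le\frac1x$, equivalently $0\le(1-x)^2$, which always holds (with equality at $\epsilon=k$).

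I expect the main obstacle to be the robustness argument, specifically justifying rigorously that the stated instance (block $P_{pred}$ and all but one $P_i$) is the true worst case: one must argue both that maximal blocking dominates partial blocking, and that, among the paths the adversary may leave unblocked, the symmetric $P_i$'s are at least as bad as $P_{pred}$. The symmetry reduction, which turns every expectation into a function of $\mu$ alone, is what keeps this manageable; the consistency computation and the final inequality are then routine. A secondary point to handle carefully is the reduction of \textsc{RandBacktrack} to the uniform random order in the equal-cost regime, together with the degenerate case of fewer than $k+1$ node-disjoint paths.
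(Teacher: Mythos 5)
Your proposal is correct, and in fact your algorithm is distributionally identical to the paper's \textsc{RandBacktrackU}: drawing $P_{pred}$ first with probability $1-\frac{\epsilon}{k}$ and otherwise running a uniform random order over all $k+1$ paths makes $P_{pred}$ the first explored path with probability $\left(1-\frac{\epsilon}{k}\right)+\frac{\epsilon}{k}\cdot\frac{1}{k+1}=\frac{k+1-\epsilon}{k+1}$ and each $P_i$ first with probability $\frac{\epsilon}{k(k+1)}$, which are exactly the probabilities of Equation~(\ref{probabilities}), with the same uniform continuation after a blocked first path. Where you genuinely differ is the analysis. The paper conditions on the first explored path and invokes \textsc{RandBacktrack}'s guarantee as a black box ($\mathbb{E}[A_{k-1}]\leq k\cdot OPT$), obtaining $\mathbb{E}[ALG]\leq \left[(k+2)(1-p_{i^*})+p_{i^*}\right]c$ without ever identifying a worst-case instance; you instead collapse the whole execution into a random permutation, compute exact expected positions via linearity and exchangeability of the $P_i$, and pin down the worst instance by a monotonicity/coupling argument (blocking more paths can only push the first unblocked position later, and a lone unblocked $P_i$ is worse than a lone unblocked $P_{pred}$ because $\mu\leq\frac{k+2}{2}$). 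Both routes land on the same worst-case expected cost $k+2-\frac{\epsilon}{k}$ and close with the same inequality $(k-\epsilon)^2\geq 0$. Your route gives exact costs and exhibits the tight instances, but it leans on the extra fact that \textsc{RandBacktrack} degenerates to a uniformly random order under uniform costs; this is standard and easily made self-contained by simply \emph{defining} the fallback as a uniform random order, but note that the paper's more modular route, which uses only the competitive guarantee, is the one that transfers to the non-uniform setting of Theorem~\ref{alg_rand_e}. One cosmetic point: the instance you use for consistency (all of $P_1,\dots,P_k$ blocked) need not itself be realizable as a correct prediction when fewer than $k$ edges are predicted blocked, but your own monotonicity argument shows its cost dominates that of every correct-prediction instance, so the bound stands.
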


\section{Robustness analysis}\label{sec:robustness}

This section is devoted to the analysis of competitive ratios that can be achieved depending on the error made in the prediction. Section~\ref{subsec:detb} deals with deterministic bounds, and Section~\ref{subsec:randb} with randomized ones.

\subsection{Deterministic bounds}\label{subsec:detb}

As a first result, we show that the lower bound of $2k+1$ on (deterministic) competitive ratios still holds in our model with prediction even if the prediction error is (at most) 2 (see Appendix~\ref{app:2k+1} for the proof).

\begin{theorem}
\label{theorem1}
There is no deterministic  algorithm  that
achieves competitive ratio smaller than $2k+1$, even when the prediction has error at most $2$ and the graph is path-disjoint.
\end{theorem}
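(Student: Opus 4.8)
The plan is to reuse the node-disjoint graph $\mathcal{G}^*$ of Figure~\ref{fig:graph} with $k+1$ equal-cost paths $P_1,\dots,P_{k+1}$ (each of cost $1$, replacing the $0$-edges by $\epsilon>0$ if strictly positive costs are desired), and to fix a single prediction against which the classical \textsc{Backtrack}-style adversary is run. Concretely, I would predict that $P_1,\dots,P_k$ are blocked and that $P_{k+1}$ is unblocked; this uses exactly $k$ predicted blocked edges, which is within budget. The reason for this choice is that the prediction is "one path away" from every hard instance: the hard instances are exactly those in which a single path is open and the other $k$ are blocked, and such an instance differs from this prediction in at most two edges.

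Next I would invoke the standard adaptive adversary underlying the $(2k+1)$ lower bound of Westphal. For any fixed deterministic algorithm, the adversary declares each path blocked as soon as the algorithm begins to explore it, maintaining the invariant that at least one path remains unexplored; since there are $k+1$ paths and the block budget is $k$, the adversary blocks the first $k$ paths the algorithm explores and leaves the last one open. On $\mathcal{G}^*$, exploring a blocked equal-cost path and returning to $s$ costs $2$, so the algorithm pays at least $2k$ on the blocked paths plus $1$ on the final open path, i.e. $ALG\geq 2k+1$, while $OPT=1$ since the open path has cost $1$. Hence the competitive ratio is at least $2k+1$, no matter how the algorithm chooses to exploit the prediction, because the adversary simply adapts to the algorithm's (deterministic) exploration order.

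The one thing that must be verified — and this is the only place the prediction enters — is that every instance produced by this adversary has error at most $2$. Writing $\hat B=\{P_1,\dots,P_k\}$ for the predicted blocked set and $B$ for the realized blocked set, the error equals $|B\,\triangle\,\hat B|$. If the open path is $P_{k+1}$, then $B=\hat B$ and the error is $0$; if the open path is some $P_j$ with $j\leq k$, then $B=\{P_1,\dots,P_{k+1}\}\setminus\{P_j\}$, so $B\,\triangle\,\hat B=\{P_j,P_{k+1}\}$ and the error is exactly $2$. In either case the error is at most $2$, and the paths are node-disjoint, as required by the statement.

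I expect the main (and essentially only) subtlety to be this error bookkeeping: one must confirm that hiding the open path never forces more than two mispredictions, which is precisely what a prediction consisting of $k$ blocked paths and a single open path guarantees. Everything else is the verbatim $(2k+1)$ adversary argument, now run on top of a fixed prediction rather than on a bare instance, so no new analysis of the traversal costs is needed.
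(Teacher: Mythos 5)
Your proposal is correct and follows essentially the same route as the paper: the same graph $\mathcal{G}^*$ with $k+1$ equal-cost node-disjoint paths, the same prediction ($P_1,\dots,P_k$ blocked, $P_{k+1}$ open), and the same adaptive adversary that blocks each explored path until only the last remains, yielding $ALG\geq 2k+1$ against $OPT=1$. Your explicit symmetric-difference bookkeeping (error $0$ if the open path is $P_{k+1}$, error exactly $2$ otherwise) is in fact spelled out more carefully than in the paper, which merely asserts the error bound.
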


We can easily achieve a matching $2k+1$ upper bound using the optimal deterministic algorithm \textsc{Backtrack} ignoring the predictions completely. Hence, the lower bound is tight.

We now consider the remaining case, when the error is (at most) 1. In this case we show that an improvement can be achieved with respect to the $2k+1$ bound. More precisely, we first show in Theorem~\ref{theorem2k-1} that for any $k\geq 3$ a ratio $2k-1$ can be achieved, and in Theorem~\ref{theoremUBk=2} that a ratio $\frac{3+\sqrt{17}}{2}\simeq 3.56$ can be achieved for $k=2$. We show in Theorems~\ref{theorem2} and~\ref{theoremLBk=2} respectively that these bounds are tight. Theorem~\ref{theor_k_1} settles the case $k=1$.

Theorem~\ref{theorem2k-1} is proven in Appendix~\ref{app:blabla}. The two main ingredients of the claimed algorithm are (1) a careful comparison of the lengths of the shortest path and the shortest path without predicted blocked edges, to decide which one to explore, and (2) the fact that when the error is at most 1, if a blocked edge is discovered and was not predicted to be so, then we know exactly the set of blocked edges (i.e., the predicted ones and the new one) and thus we can determine directly the optimal solution without further testing.

\begin{theorem}\label{theorem2k-1}
There is a $(2k-1)$-competitive algorithm when the prediction error is at most $1$ and $k\geq 3$ is known. 
\end{theorem}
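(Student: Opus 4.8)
The plan is to run a prediction-aware version of \textsc{Backtrack} whose only departure from the classical algorithm is a \emph{comparison test} that decides, at a critical moment, whether to keep probing the globally shortest remaining path or to commit to the shortest predicted-free path $P_{pred}$ (of cost $c_{pred}$). Throughout, write $d$ for the length of the shortest $s$-$t$ path ignoring all blockages and predictions, and $k_p\le k$ for the number of predicted blocked edges; note $d\le OPT$ always.

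Two structural facts, valid precisely because the error is at most $1$, drive the analysis. First (the \emph{reveal} property), if the agent ever discovers a blocked edge that was \emph{not} predicted to be blocked, then this single discrepancy already accounts for the one allowed error, so every remaining prediction is correct and the whole set of blocked edges equals the predicted ones together with the newly found edge; the agent can then compute the true optimum and walk it directly. In particular, since $P_{pred}$ avoids all predicted blocked edges, it can only be blocked by an unpredicted edge, so a failed exploration of $P_{pred}$ both reveals the entire blockage and forces $OPT\ge c_{pred}$. Second (the \emph{probe-cost} property), every path that a \textsc{Backtrack}-type strategy explores is a shortest path in the current residual graph, and the optimal path survives in that graph; hence each probe has length at most $OPT$, and $m$ failed probes followed by a success cost at most $(2m+1)OPT$.

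With these in hand I would organize the argument around a single comparison. If $c_{pred}\le (2k-1)d$, I explore $P_{pred}$ first: if it is open its length is $c_{pred}\le (2k-1)d\le (2k-1)OPT$, and if it is blocked the reveal property gives $OPT\ge c_{pred}$ and a direct walk to the optimum of total cost at most $2c_{pred}+OPT\le 3\,OPT\le(2k-1)OPT$ (using $k\ge 2$). Otherwise ($c_{pred}>(2k-1)d$) I run \textsc{Backtrack} on the residual graph, invoking the reveal property to jump to $OPT$ the instant an unpredicted blocked edge appears, and I cap the number of failed probes: after $k-1$ failures (necessarily all on predicted edges, else the reveal property already fired) I apply the comparison test to decide between probing the shortest residual path once more and committing to $P_{pred}$. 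The case analysis then splits on the true blockage. When at most $k-1$ edges are blocked (which includes the false-positive case, where a predicted edge is actually open and only $k_p-1\le k-1$ edges are blocked) \textsc{Backtrack} finishes within $k-1$ failed probes and the probe-cost property yields the $(2k-1)OPT$ bound immediately; when exactly $k$ edges are blocked the prediction is ``mostly correct'' so $OPT\ge c_{pred}$, and this slack is what the comparison test exploits to finish within budget.

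I expect the main obstacle to be exactly this last step: reconciling, at the decision point, the error-$0$ and false-negative scenarios, where committing to $P_{pred}$ is essential because a further probe could waste a $k$-th backtrack, against the false-positive scenario, where $OPT$ may lie far below $c_{pred}$ and committing to $P_{pred}$ would be catastrophic, so that one must instead probe the cheap residual path carrying the opened edge. Making a single comparison threshold serve both regimes simultaneously---while also accommodating the possibility that in a false-negative instance the committed path $P_{pred}$ itself runs into the one unpredicted blocked edge (again salvaged by the reveal property, but at an extra additive cost that must be charged against the slack $OPT\ge c_{pred}$)---is the delicate part. It is presumably also where the hypothesis $k\ge 3$ is used: for $k=2$ the optimal ratio is strictly larger than $2k-1=3$ (namely $\tfrac{3+\sqrt{17}}{2}$), so the bound on the worst residual probe in the decision step must genuinely exploit $k\ge 3$.
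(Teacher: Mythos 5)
Your two structural facts are sound and do match the paper's main ingredients: the \emph{reveal} property (an unpredicted blocked edge, under error at most $1$, pins down the entire blocked set) is exactly the paper's key lemma, and the comparison between the shortest path and the shortest predicted-unblocked path is the other ingredient. However, your proposal stops precisely where the proof actually lives: you state yourself that reconciling the error-$0$/false-negative regime with the false-positive regime at the decision point is ``the delicate part'' and you do not resolve it, so what you have is a plan rather than a proof. Worse, the plan as sketched does not give $2k-1$. In your second branch ($c_{pred}>(2k-1)d$) you run \textsc{Backtrack} \emph{unconditionally} for up to $k-1$ failed probes before making any further comparison. Only the first probe is controlled by the upfront threshold; probes $2,\dots,k-1$ can each cost a full round trip of $\approx 2\,OPT$ (each is at most $\min(c_{pred},OPT)$, and in the hard instances $c_{pred}\approx OPT$). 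Concretely, take $k+1$ disjoint paths: $P_1$ of cost $d$ slightly below $c/(2k-1)$, paths $Q_2,\dots,Q_{k-1}$ of cost $c$, and two paths of cost $c$, namely $P_{pred}$ and $R$, where $P_1,Q_2,\dots,Q_{k-1}$ are predicted blocked and are blocked. After your $k-1$ unconditional failures you have already paid $2d+2(k-2)c$, and you face $P_{pred}$ and $R$, both of cost $c$; whichever one your (deterministic) algorithm takes next, the adversary makes that one blocked by an unpredicted edge (error exactly $1$, exactly $k$ blocked edges in total), forcing one more round trip plus the optimum: total $2d+(2k-1)c$ against $OPT=c$, i.e., ratio $(2k-1)+2d/c>2k-1$. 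The slack $OPT\ge c_{pred}$ that you hope to charge against cannot absorb this, because the wasted intermediate probes are themselves of order $c_{pred}$.

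The paper's proof avoids exactly this trap by never letting full-cost probes accumulate: for the base case $k=3$ it performs a threshold test \emph{before every probe}, with tuned \emph{decreasing} constants $\alpha_1=5$ and $\alpha_2=25/7$, so that any probe actually executed costs at most a prescribed fraction of $c_a\le OPT$ (the fractions satisfy $2/\alpha_1+2/\alpha_2+3=2/5+14/25+3\le 5$), and committing to the predicted world at step $i$ is safe because $c_a\le\alpha_i c_i$ bounds $c_a$ by the survivor's cost; for $k>3$ it recurses, with the clean inductive step $2c_1+(2k-3)OPT\le(2k-1)OPT$, so the delicate tuning is confined to $k=3$. To repair your proposal you would need to replace the ``cap at $k-1$ failures, then decide once'' structure by per-probe comparisons of this kind (or an equivalent mechanism ensuring early probes are geometrically cheaper than $OPT$), which is essentially reconstructing the paper's algorithm.
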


We now show that the upper bound of $2k-1$ (for $k\geq 3$) is tight (see Appendix~\ref{app:theorem2} for the proof).

\begin{theorem}
\label{theorem2}
There is no deterministic algorithm  that 
achieves competitive ratio smaller than $2k-1$, even when the prediction has $error$ at most $1$ and the graph is path-disjoint. 
\end{theorem}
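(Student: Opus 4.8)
The plan is to reuse the node-disjoint graph $\mathcal{G}^*$ of Figure~\ref{fig:graph} with $k+1$ paths $P_1,\dots,P_{k+1}$, fixing $c_1=\dots=c_k=1$ and $c_{k+1}=2k-1$, and to predict that exactly the $k$ cheap paths $P_1,\dots,P_k$ are blocked while $P_{k+1}$ is free (so $k_p=k$). The crucial design feature is that the \emph{single} admissible error will always be a \emph{false positive} — a path predicted blocked that turns out to be free — so that a blocked cheap path discovered by the agent is always consistent with the prediction and conveys no information about where the error lies. This is precisely what disables the ``one surprise reveals everything'' shortcut exploited by the matching upper bound (Theorem~\ref{theorem2k-1}).

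I would let an adaptive adversary answer probes as follows: whenever the agent probes a cheap path $P_i$ and fewer than $k-1$ cheap paths have so far been declared blocked, declare $P_i$ blocked; and declare $P_{k+1}$ free whenever it is probed. Every such answer is consistent with some error-$\le 1$ instance, since blocking predicted-blocked paths costs no error and leaving $P_{k+1}$ free matches its prediction.

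The analysis then splits on the first time the agent probes $P_{k+1}$. If this happens after the agent has been blocked on a set $S$ of cheap paths with $|S|\le k-1$, the adversary finalizes the instance $B=\{P_1,\dots,P_k\}\setminus\{P_j\}$ for some $P_j\notin S$; this has error exactly $1$, at most $k$ blocks, and leaves both $P_j$ (cost $1$) and $P_{k+1}$ free, so $OPT=1$. Since $P_{k+1}$ is free, probing it forces the agent to pay at least $c_{k+1}=2k-1$, giving ratio at least $2k-1$. Otherwise the agent keeps probing cheap paths; the adversary blocks the first $k-1$ distinct ones and declares the $k$-th probed cheap path $P_{i_k}$ free, yielding $B=\{P_1,\dots,P_k\}\setminus\{P_{i_k}\}$ with error $1$ and $OPT=1$, while the agent has already paid $2(k-1)+1=2k-1$. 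Because $\mathcal{G}^*$ has only $k+1$ paths and the agent must reach $t$, these two branches are exhaustive, so any deterministic algorithm incurs ratio at least $2k-1$ on some instance of error $1$.

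The main obstacle is the bookkeeping that shows the adversary can honor the error-$\le 1$ budget while never committing a \emph{revealing} blockage: I would stress that the lone error is always a predicted-blocked path turning out to be free, detectable only by successfully traversing that very path, so no unpredicted blocked edge is ever shown and the agent can never deduce the full blockage set in advance. Confirming that, in each branch, all of the adversary's adaptive answers are simultaneously realized by a single admissible instance (at most $k$ blocks, error exactly $1$, path-disjoint) is the delicate step, and it is exactly what separates the attainable ratio $2k-1$ (error $\le1$) from the $2k+1$ bound that holds when a second error is permitted.
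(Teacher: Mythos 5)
Your proposal is correct and follows essentially the same construction as the paper: $k$ unit-cost paths predicted blocked plus one expensive path predicted free, with an adaptive adversary that blocks probed cheap paths (never more than $k-1$ of them) and spends its single error as a false positive on an unprobed cheap path, forcing ratio at least $2k-1$ whether the agent pays for the expensive path or backtracks through $k-1$ blocked cheap paths. The only cosmetic difference is that the paper sets the expensive path's cost above $2k+1$ (making that branch cost even more) whereas you set it to exactly $2k-1$; both choices suffice.
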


We next examine separately the cases for $k=1$ (Theorem~\ref{theor_k_1}, proved in Appendix~\ref{app:theor_k_1}) and $k=2$.

\begin{theorem}
\label{theor_k_1}
When $k=1$, there is no deterministic  algorithm   that achieves competitive ratio smaller than $3$, even when the prediction has error at most $1$ and the graph is path-disjoint.
\end{theorem}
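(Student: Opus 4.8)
The plan is to show that the classical Westphal lower bound of $2k+1=3$ for $k=1$ survives the introduction of predictions, by exhibiting an instance whose prediction is off by at most one edge. The key insight is that, since the adversary gets to choose the prediction, it should hand the algorithm a \emph{useless} prediction rather than a partially informative one: an informative prediction (say ``$P_1$ blocked, $P_2$ free'') would let a clever algorithm simply avoid the predicted-blocked path, and then the combined budget of one real block ($k=1$) and one error would be too small to punish this choice, since blocking the other path would already cost two units of error. I therefore take $\mathcal{G}^*$ of Figure~\ref{fig:graph} restricted to its first $k+1=2$ node-disjoint paths $P_1,P_2$ with equal costs $c_1=c_2=1$, and set the prediction to declare \emph{every} edge unblocked (zero predicted blockages), which is symmetric and gives the algorithm no way to distinguish the two paths.

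Next I track the behaviour of an arbitrary deterministic algorithm against an adversary that knows its strategy. To reach $t$ the agent must travel down some path far enough to reach its cost-$0$ edge adjacent to $t$; since a blockage is revealed only upon reaching that edge, the agent pays the full cost $c_i=1$ before learning anything about path $P_i$. Let $P_{i_0}$ be the first path whose terminal cost-$0$ edge the agent reaches. The adversary blocks exactly that edge. This produces a single real blockage (consistent with $k=1$) and differs from the all-free prediction in exactly one edge, so $error=1\le 1$, and the graph is path-disjoint as required. Computing the cost: the agent pays $2c_{i_0}=2$ to descend $P_{i_0}$, discover the block and return to $s$, and then $c_{3-i_0}=1$ on the remaining, necessarily unblocked, path, giving $ALG=3$, whereas $OPT=1$ through that free path, so the ratio is $3$.

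The main point to nail down, and the reason no improvement is possible here (in contrast with the $2k-1$ bound of Theorem~\ref{theorem2k-1} for $k\ge 3$), is the budget accounting: with only two equal-cost paths a single unit of error already suffices to misdirect the algorithm's unique first commitment, while the adversary has no informative prediction that a clever algorithm could not sidestep, so the uninformative prediction is exactly the adversary's best move. I would also verify that no interleaved or partial-exploration strategy helps: because a blockage is disclosed only at the far end of a path and switching paths forces a return to $s$, any strategy that fails to commit fully to the first path it descends only increases its cost, so committing fully is optimal and the factor-$2$ round-trip penalty is unavoidable. Since the argument applies whatever first path the deterministic algorithm selects, no deterministic algorithm can guarantee a competitive ratio smaller than $3$ when the error is at most $1$, which is the claim.
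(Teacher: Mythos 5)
Your proposal is correct and follows essentially the same route as the paper's proof: the same instance $\mathcal{G}^*$ with two equal-cost node-disjoint paths, the all-unblocked prediction, and an adversary that blocks whichever path the algorithm commits to first, yielding error $1$ and ratio $3$. The extra discussion you include (why an uninformative prediction is the adversary's best choice, and why partial exploration cannot help) only makes explicit what the paper leaves implicit.
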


The previous $3$-lower bound is clearly tight using the optimal deterministic algorithm \textsc{Backtrack} ($2\cdot1+1=3$). For $k=2$, we have matching lower and upper bounds of $\frac{3+\sqrt{17}}{2}\simeq 3.56$. The proof of the lower bound (Theorem~\ref{theoremLBk=2}) is in Appendix~\ref{app:LBk=2}. The upper bound (Theorem~\ref{theoremUBk=2}) follows from an adaptation of \textsc{Err1-Backtrack} for the case $k=2$ 
(the description of Algorithm \textsc{Err1-Backtrack2} and the proof of the theorem are in Appendix~\ref{app:theoremUBk=2}).

\begin{theorem}\label{theoremLBk=2}
When $k=2$, there is no deterministic algorithm that achieves competitive ratio smaller than $\frac{3+\sqrt{17}}{2}$, even when the prediction has error at most $1$ and the graph is path-disjoint.
\end{theorem}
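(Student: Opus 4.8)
The plan is to instantiate the lower-bound graph $\mathcal{G}^*$ with its three ($=k+1$) node-disjoint paths $P_1,P_2,P_3$, to set the costs to $c_1=1$ and $c_2=c_3=r$, where $r=\frac{3+\sqrt{17}}{2}$ is the positive root of $r^2-3r-2=0$ (equivalently $r^2=3r+2$, so that $3+\frac{2}{r}=r$), and to let the prediction declare only $P_1$ blocked (so $k_p=1\le k=2$ and $P_2,P_3$ are predicted free). Since only $P_1$ is predicted blocked, the real instances with $error\le 1$ are exactly $\emptyset$, $\{P_1\}$, $\{P_1,P_2\}$ and $\{P_1,P_3\}$, each blocking at most $2=k$ paths, so every reality I invoke is admissible; and, as in $\mathcal{G}^*$, a failed traversal of $P_i$ costs $2c_i$. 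For a lower bound it suffices to show that, given any deterministic algorithm, the adversary can pick one such reality on which the algorithm's ratio is at least $r$.

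I would then case on the first path the algorithm traverses. If it first traverses a predicted-free path, say $P_2$ (the case $P_3$ is symmetric because $c_2=c_3$), the adversary chooses the reality $\emptyset$: now $P_2$ is free, the algorithm stops with cost $c_2=r$, while $OPT=c_1=1$, giving ratio $r$ (blocking $P_2$ instead, via $\{P_1,P_2\}$, would only give $1+2c_2/c_3=3<r$, so $\emptyset$ is the adversary's better choice). If instead the algorithm first traverses the predicted-blocked path $P_1$, the adversary keeps $P_1$ blocked (costing $2c_1=2$ and contributing no error); the algorithm must next traverse one of $P_2,P_3$, say $P_j$, and the adversary blocks $P_j$ as well (reality $\{P_1,P_j\}$, a single prediction error, two blocked paths). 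The agent is then forced back and onto the unique remaining free path, for a total cost $2c_1+2r+r=2+3r$ against $OPT=r$, i.e. ratio $(2+3r)/r=3+\frac{2}{r}=r$ by the defining identity. Thus every possible first move is punished to ratio at least $r$, and the instance is path-disjoint with $error\le 1$, which proves the claim.

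The hard part is not the verification but the \emph{calibration} of the costs: the whole point is to force the "trust-the-prediction" behaviour (go straight to a predicted-free path, incurring ratio $c_2$) and the "verify-the-predicted-block" behaviour (probe $P_1$ first, incurring ratio $3+\frac{2}{r}$) to be equally bad, and $c_2=c_3=r$ with $r^2=3r+2$ is precisely what equalizes these two quantities — this is where $\frac{3+\sqrt{17}}{2}$ comes from. The remaining subtlety is to rule out cleverer adaptive play, but with only three paths and at most two blocks the algorithm's decision tree is shallow (one root choice, then a single forced probe among the two remaining paths), so the enumeration above is exhaustive; one just has to confirm at each leaf that the invoked reality respects both $error\le 1$ and the global cap of $k=2$ blocked paths, which it does by construction.
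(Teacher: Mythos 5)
Your proposal is correct and matches the paper's own proof essentially verbatim: the same graph $\mathcal{G}^*$ with $c_1=1$, $c_2=c_3=\frac{3+\sqrt{17}}{2}$ and only $P_1$ predicted blocked, the same adversary responses (reality $\emptyset$ if the algorithm starts on a predicted-free path, reality $\{P_1,P_j\}$ if it probes $P_1$ first), and the same equalizing identity $3+\frac{2}{r}=r$. The extra remarks on calibration and on why the adversary prefers $\emptyset$ over $\{P_1,P_2\}$ are sound but not needed beyond what the paper does.
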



\begin{theorem}\label{theoremUBk=2}
There exists a $\frac{3+\sqrt{17}}{2}$-competitive algorithm when the prediction error is at most $1$ and $k=2$.
\end{theorem}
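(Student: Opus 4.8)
The plan is to instantiate the algorithm \textsc{Err1-Backtrack} (the engine behind Theorem~\ref{theorem2k-1}) at $k=2$, but to replace the comparison constant $2k-1=3$ used in its threshold tests by $\rho := \frac{3+\sqrt{17}}{2}$, the positive root of $\rho^2-3\rho-2=0$; equivalently $\rho = 3+\frac{2}{\rho}$. I call the resulting algorithm \textsc{Err1-Backtrack2}. As in Theorem~\ref{theorem2k-1}, the analysis rests on two facts available because $error\le 1$: since $k$ is known we have $k_p\le 2$, and the moment the agent meets a blocked edge that was \emph{not} predicted, the whole blocked set is determined (the predicted ones together with this new one), so the agent can route along the true optimal path without further probing.

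First I would dispose of the trivial regime $k_p=0$: then $error\le 1$ forces at most one real blockage, so plain \textsc{Backtrack} is $3$-competitive, and $3\le\rho$. For $k_p\in\{1,2\}$ I would let $c_{pred}$ be the cost of the shortest path $P_{pred}$ avoiding all predicted blocked edges, and let $P_1$ (and, when $k_p=2$, $P_2$) be the cheapest path through each predicted blocked edge, with $c_1\le c_2$. The algorithm compares $c_{pred}$ first against $c_{k_p}$, and, when $c_{pred}$ is large, against the scaled thresholds $\rho\,c_1$ (and a second constant times $c_2$): if a threshold is not met it abandons the cheap predicted-blocked paths and runs \textsc{Backtrack} on the graph with the predicted edges deleted (which begins at $P_{pred}$); otherwise it probes the cheap path, invoking the second fact above if an unpredicted block surfaces.

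The bulk of the work is the case analysis, organised by $k_p$ and by the true blocked set, which $error\le 1$ pins down sharply: for $k_p=2$ the real set is a subset of the two predicted edges (so no unpredicted block ever appears and at most two blockages occur), while for $k_p=1$ the real set is one of $\emptyset$, $\{a\}$, or $\{a,c\}$ with $c$ unpredicted. In each branch I would bound the traversed length against $OPT$, the thresholds being exactly what makes every branch come out at most $\rho\cdot OPT$. The single binding branch is $k_p=1$ with $c_{pred}\ge\rho\,c_1$ and true set $\{a,c\}$, where $c$ lies on $P_{pred}$: the agent probes $P_1$ (blocked, cost $2c_1$), then $P_{pred}$ (blocked by $c$, cost $2c_{pred}$), learns the full blocked set and routes optimally, for total cost $2c_1+2c_{pred}+OPT$. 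Using $OPT\ge c_{pred}$ and $c_1\le c_{pred}/\rho$, this is at most $\big(3+\tfrac{2}{\rho}\big)OPT=\rho\cdot OPT$, which is precisely the identity defining $\rho$ and explains the value $\frac{3+\sqrt{17}}{2}$.

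The part I expect to be most delicate is the $k_p=2$, $c_{pred}>c_2$ regime: here two distinct cheap predicted-blocked paths may be probed before $P_{pred}$ is reached, so a second threshold (a constant $\tau\in[1,\rho/(\rho-2)]$ applied to $c_2$) is needed, and one must check that all three admissible true sets — both predicted edges blocked, only the $P_1$-edge, or only the $P_2$-edge — simultaneously yield ratio at most $\rho$ under the same threshold choices. Verifying this simultaneity, together with confirming that the $c_{pred}\le c_{k_p}$ sub-regimes never exceed $3$, is the routine but lengthy bookkeeping that finishes the proof and matches the lower bound of Theorem~\ref{theoremLBk=2}.
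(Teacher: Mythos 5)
Your proposal is, in substance, the paper's own approach: the published algorithm \textsc{Err1-Backtrack2} is exactly \textsc{Err1-Backtrack} with its comparison constants replaced by $\alpha_1=\frac{3+\sqrt{17}}{2}=\rho$ and $\alpha_2=1$ (note that $\tau=1$ lies in your admissible range $[1,\rho/(\rho-2)]$), and its analysis rests on the same two facts you isolate: when $error\le 1$, a discovered blocked edge outside the predicted set pins down the whole blocked set, and the identity $\rho=3+2/\rho$ makes the binding branch ($2c_1+2c_{pred}+OPT$ in your organisation, $2c_1+2c_2+OPT$ in the paper's) come out to exactly $\rho\cdot OPT$. The differences are organisational rather than conceptual: the paper runs one uniform loop in which $P_1,P_2$ are the current shortest paths of the whole graph (whether or not predicted blocked) and the case analysis follows the line at which the algorithm terminates, whereas you case on $k_p\in\{0,1,2\}$ and on the admissible true blocked sets, taking $P_1,P_2$ to be the cheapest paths through each predicted blocked edge; I checked that your deferred $k_p=2$ bookkeeping does close, and any $\tau$ in your stated range works (the constraint $\tau\ge 1$ comes from the both-edges-blocked branch, $\tau\le\rho/(\rho-2)$ from the only-$P_1$-edge-blocked branch).

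One step of your sketch fails as literally stated and needs a one-line patch. In the regime $c_{pred}\le c_{k_p}$ you run plain \textsc{Backtrack} (with the unpredicted-block shortcut) and assert the ratio never exceeds $3$; under adversarial tie-breaking this is false. Take $k_p=2$ with three node-disjoint paths: one through $a_1$ of cost $1-\delta$, one through $a_2$ of cost $1$, and $P_{pred}$ of cost $1$, with true blocked set $\{a_1,a_2\}$ (so $error=0$ and the shortcut never fires). Then $c_{pred}=1\le c_2=1$, and \textsc{Backtrack} may explore the $a_1$-path, then break the tie toward the $a_2$-path, then reach $P_{pred}$, for total cost $5-2\delta$ against $OPT=1$, exceeding $\rho\simeq 3.56$. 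The fix is either to stipulate that ties are broken in favour of paths avoiding predicted blocked edges, or to adopt the paper's formulation, in which every ``threshold met'' branch runs \textsc{Backtrack} on the graph with the predicted edges deleted, so that predicted-blocked paths are never probed there and the $3$-competitiveness of those branches is immediate. With that repair your argument is complete and matches the lower bound of Theorem~\ref{theoremLBk=2}.
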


\subsection{Randomized bounds}\label{subsec:randb}

We now consider randomized algorithms, and as explained in introduction we focus on graphs with node disjoint paths. Similarly as for the deterministic case, we first show that the lower bound of $k+1$ on (randomized) competitive ratios still holds in our model with prediction even if the prediction error is (at most) 2. 

\begin{theorem}
\label{theor_rand_k1}
There is no randomized algorithm  that 
achieves competitive ratio smaller than $k+1$ against an oblivious adversary, even when the prediction has error at most $2$ and the graph is path-disjoint. 
\end{theorem}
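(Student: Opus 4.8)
The plan is to prove this randomized lower bound via Yao's principle, exactly as Westphal's original $(k+1)$ bound is proved, but now equipped with a fixed prediction of small error. Concretely, I would exhibit one fixed prediction together with a distribution $\mathcal{D}$ over real instances, all of prediction error at most $2$, on which \emph{every} deterministic algorithm incurs expected cost at least $(k+1)\cdot OPT$. Since a randomized algorithm against an oblivious adversary is a distribution over deterministic ones, and the performance against the worst instance is at least the expected performance against $\mathcal{D}$, this yields the theorem.

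I would reuse the graph $\mathcal{G}^*$ with $k+1$ node-disjoint paths $P_1,\dots,P_{k+1}$, now all of equal cost $c_1=\dots=c_{k+1}=1$ (replacing $0$ by $\epsilon>0$ on the blockable edges to stay strictly positive, as noted in the Preliminaries). The fixed prediction declares $P_1,\dots,P_k$ blocked and $P_{k+1}$ feasible; this uses exactly $k$ predicted blocked edges, hence respects $k_p\le k$. The distribution $\mathcal{D}$ picks $i$ uniformly in $\{1,\dots,k+1\}$ and realizes instance $I_i$, in which $P_i$ is the unique unblocked path and the cost-$0$ edge of every other path is blocked; thus $OPT=1$ in every instance. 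The error check is the crux of why the statement concerns $error\le 2$: for $i=k+1$ the real blockages coincide with the predicted ones, so $error=0$; for $i\le k$ exactly two predictions are false, namely $P_i$ (predicted blocked, actually open) and $P_{k+1}$ (predicted open, actually blocked), so $error=2$. Hence the whole support of $\mathcal{D}$ has error at most $2$.

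Next I would argue that any deterministic algorithm reduces to an ordering of the paths, so that the prediction gives no advantage. Since the paths share only $s$ and $t$, the traveller can switch paths only through $s$; traversing a blocked path therefore costs $2$ and reaching $t$ along the open path costs $1$. Moreover every blocked path returns the same observation, so the prediction is the only a priori distinction between paths, and the algorithm's behaviour is fully described by the order $\sigma=(\sigma_1,\dots,\sigma_{k+1})$ in which it tries the paths. On instance $I_i$ the open path $P_i$ sits at position $\sigma^{-1}(i)$, giving cost $2\sigma^{-1}(i)-1$; averaging over the uniform choice of $i$ yields
\[
\mathbb{E}[ALG]=\frac{1}{k+1}\sum_{i=1}^{k+1}\bigl(2\sigma^{-1}(i)-1\bigr)=\frac{1}{k+1}\Bigl((k+1)(k+2)-(k+1)\Bigr)=k+1,
\]
independently of $\sigma$ (in particular, the prediction-aware choice of trying $P_{k+1}$ first gains nothing, since $\sigma^{-1}$ ranges over all positions). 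Any algorithm that wastes length only does worse. As $OPT=1$, every deterministic algorithm has expected ratio at least $k+1$ on $\mathcal{D}$, and Yao's principle gives the claimed bound.

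The only genuinely delicate points are the two highlighted above. First, one must verify that the error stays exactly at $2$ while keeping $k_p\le k$; this is precisely why the construction proves $k+1$ for $error\le 2$ but cannot be pushed down to $error\le 1$, consistent with the weaker bound $c^*\ge k$ of Theorem~\ref{th:randk}. Second, one must justify that awareness of the prediction cannot lower the expected cost below $k+1$, which is handled by the observation that under $\mathcal{D}$ the position of the open path is uniform for \emph{every} fixed ordering.
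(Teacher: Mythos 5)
Your proposal is correct and follows essentially the same route as the paper's proof: the same graph $\mathcal{G}^*$ with $k+1$ equal-cost disjoint paths, the same fixed prediction (first $k$ paths blocked), the same uniform distribution over the $k+1$ instances with a single open path, the same cost computation $\frac{1}{k+1}\sum_{l=1}^{k+1}(2l-1)=k+1$, and Yao's principle. Your write-up is in fact slightly more explicit than the paper's on two points it leaves implicit — that every deterministic algorithm collapses to a fixed path ordering because all blocked paths yield identical observations, and that the error is exactly $0$ or $2$ across the support — but the argument is the same.
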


\begin{proof}
In what follows we provide a randomized set of instances on which the expected cost of any deterministic algorithm is at least $k+1$ times the optimal cost. It follows from Yao's Principle~\cite{Yao} that the competitive ratio of any randomized algorithm is at least $k+1$.

Consider a graph $\mathcal{G}^*$ with $k+1$ paths $P_1$, $P_2$, ..., $P_k$, $P_{k+1}$, which are node-disjoint. All the paths have costs equal to $1$, meaning that $c_1=c_2=...=c_k=c_{k+1}=1$.

Paths $P_1$, $P_2$, ... , $P_k$ are predicted to be blocked. We choose $i \in \{1,...,k+1 \}$ uniformly at random and block all paths $P_j$ with $j \ne i$ ($k$ blocked edges as all paths are node-disjoint). The prediction has an error of at most $2$.

So, only the path $P_i$ is feasible at cost $1$ and the optimal offline cost is $1$. Furthermore, an arbitrary deterministic online algorithm finds path $P_i$ on the $l$th trial for $l=1,...,k+1$ with probability $\frac{1}{k+1}$.

If the algorithm is successful on its $l$th try, it incurs a cost of $2l-1$, and thus it has an expected cost of at least
    $ \frac{1}{k+1} \sum_{l=1}^{k+1}(2l-1) = \frac{1}{k+1} \cdot (k+1)^{2} = k+1$.
\end{proof}

If $k$ is known and the graph is path-disjoint, then the above $k+1$ lower bound is tight using the optimal randomized online algorithm \textsc{RandBacktrack} without predictions.

We finally consider the case of error (at most) $1$. We show in Theorem~\ref{th:randk} (see Appendix~\ref{app:noke1} for the proof) a lower bound of $k$. We leave as an open question closing this gap between $k$ and $k+1$ when the error is at most $1$ and $k \geq 2$. For the special case of $k=1$, it is easy to show a matching lower bound of $2$.

\begin{theorem}\label{th:randk}
There is no randomized algorithm  that 
achieves competitive ratio smaller than $k$ against an oblivious adversary, even when the prediction has error at most $1$ and the graph is path-disjoint. 
\end{theorem}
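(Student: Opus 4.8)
The plan is to prove the lower bound via Yao's principle~\cite{Yao}, exhibiting a distribution over error-$1$ instances on which every deterministic algorithm pays at least $k$ times the optimum; this mirrors the proof of Theorem~\ref{theor_rand_k1} but ``shaves off'' one path, trading the bound $k+1$ for $k$ in exchange for cutting the prediction error from $2$ down to $1$. The crucial structural observation is that to force a deterministic algorithm to search through $r$ mutually indistinguishable candidate paths, all $r$ of them must carry the \emph{same} prediction; predicting all of them blocked and then leaving exactly one unblocked in the real instance costs only a single error, whereas the error-$2$ construction of Theorem~\ref{theor_rand_k1} additionally had to block the predicted-safe path $P_{k+1}$.

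Concretely, I would take (the relevant part of) $\mathcal{G}^*$ with $k$ node-disjoint paths $P_1,\dots,P_k$ of equal cost $c_1=\dots=c_k=1$, and fix the prediction that declares \emph{all} of $P_1,\dots,P_k$ blocked (which is admissible, since $k$ upper-bounds the number of predicted blocked edges). The random instance is obtained by drawing $i\in\{1,\dots,k\}$ uniformly and blocking every $P_j$ with $j\neq i$, so that $P_i$ is the unique feasible path and $OPT=1$. Each such instance has exactly $k-1\le k$ real blocks, and the only discrepancy with the prediction is on $P_i$ (predicted blocked, actually open), so $error=1$; note also that the all-blocked instance is never used, consistent with the graph always admitting an $s$--$t$ path.

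For the calculation, I would argue that under this prediction the $k$ paths are completely symmetric and carry no distinguishing information before the algorithm reaches the far end of a path, where the (single) cost-$0$ blockable edge sits. Hence a deterministic algorithm traverses the paths to their ends in a fixed order determined solely by the prediction, paying a round trip of $2$ for each blocked path it fully probes and $1$ for the final traversal of $P_i$; so on the instance where $P_i$ has rank $l$ in this order it pays at least $2(l-1)+1=2l-1$. Since $i$ is uniform, $P_i$'s rank is uniform over $\{1,\dots,k\}$, giving expected cost at least $\frac{1}{k}\sum_{l=1}^{k}(2l-1)=\frac{1}{k}\cdot k^2=k$, while $OPT=1$ throughout; Yao's principle then yields the claimed randomized lower bound of $k$. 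I expect the main obstacle to be the rigorous justification that partial probing cannot help, i.e., that retreating from a half-explored path only adds cost and never reveals information, so that the ``fixed exploration order'' and ``uniform rank'' claims are legitimate; this is exactly where one must use that each path is blocked only at its far (cost-$0$) edge, making every unsuccessful probe a full round trip.
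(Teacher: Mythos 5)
Your proposal is correct under the model as the paper states it, and the core of the argument --- Yao's principle~\cite{Yao} applied to a uniform distribution over ``exactly one of the $k$ predicted-blocked unit paths is open'' instances, giving expected cost at least $\frac{1}{k}\sum_{l=1}^{k}(2l-1)=k$ against $OPT=1$ --- is exactly the paper's argument. The one genuine difference is the construction: you discard the extra path entirely, whereas the paper keeps a $(k+1)$-st node-disjoint path $P_{k+1}$ of cost $c_{k+1}\geq k+1$ that is predicted open and actually open, and then argues separately that any algorithm ever touching $P_{k+1}$ already pays ratio at least $k+1$. Your version is simpler, but note what the extra path buys: in the paper's construction the prediction is consistent with a feasible instance (the predicted graph still contains an $s$--$t$ path, namely $P_{k+1}$), whereas your prediction declares \emph{every} $s$--$t$ path blocked, i.e., it predicts an instance that can never occur, and any algorithm can see immediately that the prediction is erroneous. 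The paper's notation section never explicitly forbids such predictions, so your lower bound stands as a proof of the literal statement; but the paper's own algorithms (e.g., \textsc{E-Backtrack}, \textsc{E-RandBacktrack}) begin by computing $P_{pred}$, the shortest path after removing all predicted blocked edges, which is undefined for your prediction --- so the model implicitly assumes predictions leave $s$ and $t$ connected, and under that reading your construction would be inadmissible while the paper's survives. The repair is immediate (add back an expensive predicted-open path, as in Theorem~\ref{theor_rand_k1} but with cost $\geq k+1$ instead of $1$), and your closing concern --- justifying that partial probing and retreating cannot help, so that a deterministic algorithm is effectively a fixed exploration order --- is a standard step that the paper's proof also glosses over, so it is not a point of divergence.
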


\bibliographystyle{splncs04}
\bibliography{mybibliography.bib}

\newpage

\appendix

\section{Missing proofs from Section~\ref{sec:tradeoffs}}\label{app:a}



\subsection{Proof of Theorem~\ref{th:ubk=1}}\label{app:ubk=1}

{\bf Theorem~\ref{th:ubk=1}.}{\it
For $0<\epsilon\leq  1$, there exists a randomized $[1+\epsilon, 1+\frac{1}{\epsilon}]$-competitive algorithm when the graph is path-disjoint and $k=1$. 
} 

\begin{proof}
Let $P_{pred}$ be the shortest path on the graph that remains after removing the only predicted block (if any) and $P_1$ be the shortest path with $P_1 \ne P_{pred}$. There is at least one feasible path among them because $1$ is an upper bound on the real number of blocked edges.

Let us consider the following algorithm \textsc{RandBacktrackOne} which is a modified version of \textsc{RandBacktrack} with different probabilities. 

\begin{itemize}
    \item The algorithm first chooses a path and tries to traverse it. If this path is feasible, the algorithm terminates. If it is blocked, we return to $s$ and traverse the other path.
    
    \item If $c_{pred} > \frac{2}{\epsilon} \cdot c_1$: it chooses at first path $P_1$.
    
    \item Otherwise: it chooses at first path $P_{pred}$ with probability 
    $p_{pred}=\frac{2c_1-\epsilon \cdot c_{pred}}{2c_1}$ and path $P_1$ with probability $p_1 = 1-p_{pred} = \frac{\epsilon \cdot c_{pred}}{2c_1}$.
\end{itemize}

When $error=0$, then $P_{pred}$ is the optimal path ($OPT = c_{pred}$).
\begin{itemize}
    \item If $c_{pred} > \frac{2}{\epsilon} \cdot c_1$, then the algorithm chooses at first $P_1$ and then $P_{pred}$.
    
    \textsc{RandBacktrackOne} has competitive ratio:
\begin{equation*}
    r \leq \frac{2c_1+c_{pred}}{c_{pred}} <1+\epsilon
\end{equation*}
    \item Otherwise:  $c_{pred} \leq \frac{2}{\epsilon} \cdot c_1$ and the ratio is:
    \begin{equation*}
         r \leq p_1  \bigg( \frac{2c_1}{c_{pred}}+1\bigg) +
         p_{pred}\cdot 1
    \end{equation*}
    \begin{equation*}
         = p_1 \cdot \frac{2c_1}{c_{pred}}+1=
         \frac{\epsilon \cdot c_{pred}}{2c_1} \cdot \frac{2c_1}{c_{pred}} +1 
    \end{equation*}
    \begin{equation*}
         = 1 + \epsilon 
    \end{equation*}
\end{itemize}

When $error \ne 0$, then:

\begin{itemize}
    \item If $P_{pred}$ is optimal, then the analysis is the same as when $error=0$ and since $1+\epsilon \leq 1+ \frac{1}{\epsilon}$ (as $\epsilon \leq 1$) we get that: 
    \begin{equation*}
        r \leq 1+ \frac{1}{\epsilon}
    \end{equation*}
    
    \item If $P_1$ is optimal ($OPT = c_1$):
    \begin{itemize}
        \item If $c_{pred} > \frac{2}{\epsilon} \cdot c_1$, then the algorithm chooses at first $P_1$ and its competitive ratio is $1$.
        
        \item Otherwise:  $c_{pred} \leq \frac{2}{\epsilon} \cdot c_1$ and \textsc{RandBacktrackOne} has ratio:
        \begin{equation*}
             r \leq p_1 \cdot 1 + p_{pred}  \bigg( \frac{2c_{pred}}{c_1}+1\bigg)
        \end{equation*}
        \begin{equation*}
             = 1 + p_{pred} \cdot \frac{2c_{pred}}{c_1}
        \end{equation*}
        \begin{equation*}
             = 1 + \frac{2c_1-\epsilon \cdot c_{pred}}{2c_1} \cdot \frac{2c_{pred}}{c_1}
        \end{equation*}
        \begin{equation*}
             = 1 + \frac{2c_1-\epsilon \cdot c_{pred}}{c_1} \cdot \frac{c_{pred}}{c_1}
        \end{equation*}
        We have that:
        \begin{equation*}
             \frac{2c_1 \cdot c_{pred}-\epsilon \cdot c_{pred}^2}{c_1^2} \leq \frac{1}{\epsilon}
        \end{equation*}  
        \begin{equation*}
            \Leftrightarrow 2 \epsilon \cdot c_1 \cdot c_{pred}-\epsilon^2 \cdot c_{pred}^2 \leq c_1^2
        \end{equation*}
        \begin{equation*}
            \Leftrightarrow (c_1- \epsilon \cdot c_{pred})^2 \geq 0 \text{, which is always true.}
        \end{equation*}
        Therefore, we get:
        \begin{equation*}
             r \leq 1 + \frac{1}{\epsilon}
        \end{equation*}
    \end{itemize}
\end{itemize}

As a result, our algorithm is $[1+\epsilon, 1+\frac{1}{\epsilon}]$-competitive.
\end{proof}

\subsection{Proof of Theorem~\ref{th:RBU}}\label{app:rbu}

{\bf Theorem~\ref{th:RBU}.} {\it 
For $0<\epsilon\leq  k$, there exists a randomized $[1+\epsilon, k+\frac{k}{\epsilon}]$-competitive algorithm when the graph is path-disjoint and the costs are uniform.}

\begin{proof}
The algorithm \textsc{RandBacktrackU} is given in Algorithm~\ref{algo_uniform}. 

\IncMargin{1em}
\begin{algorithm}
\SetKwData{Left}{left}\SetKwData{This}{this}\SetKwData{Up}{up}
\SetKwFunction{Union}{Union}\SetKwFunction{FindCompress}{FindCompress}
\SetKwInOut{Input}{Input}\SetKwInOut{Output}{Output}
\Input{An instance of CTP with prediction with parameter $k$, $\epsilon>0$}
\Output{An $s-t$ path}
\BlankLine
$P_{pred},c \leftarrow$ shortest path and its cost after removing all predicted blocked edges

$P_1, \cdots, P_k$ of equal cost $c \leftarrow$ $k$ shortest paths except for $P_{pred}$

Explore at first path $P_i$ with probability $p_i$, where:
\begin{equation}
\label{probabilities}
  p_i =
    \begin{cases}
      \frac{\epsilon}{k(k+1)}, & \text{if $i \in \{ 1, 2,..., k\}$ }\\
      \frac{k+1-\epsilon}{k+1}, & \text{if $i = pred$}\\
    \end{cases}       
\end{equation}

\lIf{$P_i$ is not blocked}{
    output it}  
\Else{

    Execute \textsc{RandBacktrack} with parameter $k-1$ on the graph that remains after deleting path $P_i$.
}
\caption{\textsc{RandBacktrackU}}\label{algo_uniform}
\end{algorithm}\DecMargin{1em}

We denote by $ALG$ the cost of algorithm \textsc{RandBacktrackU}, by $cost(P_i)$ the cost of \textsc{RandBacktrackU} when the algorithm explores $P_i$ at first and by $A_{k-1}$ the cost of \textsc{RandBacktrack}.

The parameter $k$ is an upper bound on the real number of blocked edges, and thus there is at least one feasible path among $P_1, \cdots, P_k, P_{pred}$. \textsc{RandBacktrack} always runs on a graph with up to $k-1$ blocked paths (as a blocked path is already discovered). Therefore, its expected cost is:
\begin{equation*}
    \mathbb{E}[A_{k-1}] \leq k\cdot c_{opt} = k \cdot c
\end{equation*}
We also note that:
\begin{equation}
\label{eq_one}
    p_1+p_2+...+p_k+p_{pred}=1
\end{equation}

In the case where the prediction is correct, $P_{pred}$ is optimal ($OPT=c$) and
\begin{equation*}
    \mathbb{E}[ALG] \leq p_1\cdot \mathbb{E}[cost(P_1)] +
    \cdots +
    p_k\cdot \mathbb{E}[cost(P_k)] + p_{pred}\cdot c
\end{equation*}
\begin{equation*}
    \leq p_1 \big( 2c + \mathbb{E}[A_{k-1}] \big) +
    ... +
    p_k \big( 2c + \mathbb{E}[A_{k-1}] \big) + p_{pred}\cdot c
\end{equation*}
\begin{equation*}
    \leq p_1 ( 2c + k\cdot c) +
    ... +
    p_k ( 2c + k\cdot c) + p_{pred}\cdot c
\end{equation*}
\begin{equation*}
    =  (k+2) (p_1+p_2+...+p_k) c + p_{pred}\cdot c
\end{equation*}
From (\ref{eq_one}) we get:
\begin{equation*}
    \mathbb{E}[ALG] \leq \big[ (k+2)(1-p_{pred}) + p_{pred} \big]  c
\end{equation*}
\begin{equation*}
    = \big[ (k+1) (1-p_{pred}) + 1 \big]  c
\end{equation*}
From (\ref{probabilities}) it follows that:
\begin{equation*}
     \mathbb{E}[ALG] \leq \bigg[ (k+1) \bigg(1- \frac{k+1-\epsilon}{k+1} \bigg)+1 \bigg]  c
\end{equation*}
\begin{equation*}
    \leq (1+\epsilon) c =  (1+\epsilon) c_{pred} 
\end{equation*}

Otherwise: we distinguish the two following cases:
\begin{itemize}
    \item If $P_{pred}$ is optimal ($OPT=c$), then the analysis is as previous and since $\epsilon \leq k$:
    \begin{equation*}
        \mathbb{E}[ALG] \leq (1+\epsilon)  c \leq (k+1) OPT 
        \leq \bigg(k+\frac{k}{\epsilon}\bigg)  OPT
    \end{equation*}
    
    \item Else: there is $i^* \in \{ 1, 2, ..., k\}$ such that $P_{i^*}$ is optimal. We also note that $OPT= cost(P_{i^*})=c_{i^*} = c$ and it follows:
    \begin{equation*}
    \mathbb{E}[ALG] \leq
    \end{equation*}
    \begin{equation*}
    \leq p_1\cdot \mathbb{E}[cost(P_1)] +
    \cdots + p_{i^*}\cdot c+\cdots+
    p_k\cdot \mathbb{E}[cost(P_k)] + p_{pred}\cdot \mathbb{E}[cost(P_{pred})]
    \end{equation*}
    \begin{equation*}
    \leq p_1 \big( 2c + \mathbb{E}[A_{k-1}] \big) +
    \cdots + p_{i^*}\cdot c + \cdots
    p_k \big( 2c + \mathbb{E}[A_{k-1}] \big) + p_{pred} \big( 2c + \mathbb{E}[A_{k-1}] \big)
    \end{equation*}
    \begin{equation*}
    \leq p_1 ( 2c + k\cdot c ) +
    \cdots + p_{i^*}\cdot c + \cdots
    p_k ( 2c + k\cdot c ) + p_{pred} ( 2c + k \cdot c )
    \end{equation*}
    \begin{equation*}
        =\bigg[ (k+2) \bigg( \sum_{\substack{i=1 \\ i\neq i^*}}^{k} p_i + p_{pred} \bigg) + p_{i^*} \bigg] c
    \end{equation*}
    From (\ref{eq_one}) we get:
    \begin{equation*}
        \mathbb{E}[ALG] \leq \big[ (k+2) (1-p_{i^*}) + p_{i^*} \big]  c
    \end{equation*}
    \begin{equation*}
        = \big[ k -(k+1) p_{i^*}+2 \big] c
    \end{equation*}
    From (\ref{probabilities}) it follows that:
    \begin{equation*}
        \mathbb{E}[ALG] \leq \bigg[ k-(k+1) \frac{\epsilon}{k(k+1)}+2 \bigg] c
    \end{equation*}
    \begin{equation*}
        \leq \bigg[ k+\frac{2k-\epsilon}{k} \bigg] c
    \end{equation*}
    We have that:
    \begin{equation*}
        \frac{2k-\epsilon}{k} \leq \frac{k}{\epsilon}
    \end{equation*}
    \begin{equation*}
        \Leftrightarrow 2\epsilon \cdot k-\epsilon^2 \leq k^2
    \end{equation*}
    \begin{equation*}
        \Leftrightarrow (k-\epsilon)^2 \geq 0 \text{, which is always true.}
    \end{equation*}
    Therefore, we finally get:
    \begin{equation*}
        \mathbb{E}[ALG] \leq \bigg( k+\frac{k}{\epsilon} \bigg) OPT
    \end{equation*}
\end{itemize}

\end{proof}

\section{Missing proofs from Section~\ref{sec:robustness}}\label{app:b}

\subsection{Proof of Theorem~\ref{theorem1}}\label{app:2k+1}

{\bf Theorem~\ref
{theorem1}.} {\it 
There is no deterministic online algorithm with predictions that
achieves competitive ratio smaller than $2k+1$, even when the prediction has error at most $2$ and the graph is path-disjoint.}

\begin{proof} 

Consider a graph $\mathcal{G}^*$ with $k+1$ paths $P_1$, $P_2$,..., $P_{k}$, $P_{k+1}$, which are node-disjoint. All the paths have equal costs, meaning that $c_1=c_2=...=c_k=c_{k+1}$. $P_1$, $P_2$,..., $P_{k}$ are predicted to be blocked ($k$ predicted blocks).

Every deterministic algorithm corresponds to a permutation,
which describes in which order the paths are being explored. The adversary can only create an instance with error no more than $2$.

The adversary blocks every path tried by a deterministic algorithm except for the last one ($k$ blocks). Then the adversary creates an instance with a prediction error of at most $2$ and the algorithm has competitive ratio at least $2k+1$.
\end{proof}

\subsection{Proof of Theorem~\ref{theorem2k-1}.}\label{app:blabla}

{\bf Theorem~\ref{theorem2k-1}}{\it 
There is a $(2k-1)$-competitive algorithm when the prediction error is at most $1$ and $k\geq 3$ is known. 
}
\begin{proof}
The claimed algorithm, called \textsc{Err1-Backtrack}, is described in Algorithm~\ref{algo_err1BT} in a recursive form. 

We prove the result by induction on $k$. The inductive step is easy: suppose that \textsc{Err1-Backtrack} is $(2k-3)$-competitive for $k-1\geq 3$, and let us consider an instance with parameters $k$ and $B$. If $P_1$ is not blocked, the algorithm is optimal. If $e\not\in B$ is found, then since the error is at most 1 we know that $B\cup\{e\}$ is exactly the set of blocked edges. Then the cost of the algorithm is at most $2c_1+OPT\leq 3OPT$. Finally, by induction, if the recursive step (line~\ref{algo:test199}) is called the cost of the algorithm is at most $2c_1+(2k-3)OPT\leq (2k-1) OPT$.\\

We now focus on the case $k=3$. We have to show that the algorithm is 5-competitive.

Let us first consider the case where condition $c_a\leq \alpha_ic_i$ in line~\ref{alg:test1} occurs. 
\begin{itemize}
    \item If it holds at $i=1$, then $c_a\leq \alpha_1c_1=5c_1$. When applying \textsc{Backtrack}, either the first explored path is not blocked, or a new blocked edge $e\not\in B$ is found. In the first case, the cost of the algorithm is $c_a\leq 5 c_1\leq 5 OPT$. In the latter case, the blocked edges are {\it exactly} $B\cup \{e\}$, so \textsc{Backtrack} is 3-competitive.
    \item  If it holds at $i=2$, then $c_a\leq \alpha_2c_2=\frac{25c_2}{7}$, and $c_a> \alpha_1c_1=5c_1$. Similarly, either the first explored path is not blocked, or a new blocked edge $e\not\in B$ is found. In the first case, the cost of the algorithm is $2c_1+c_a\leq (2/5+1) c_a=7c_a/5\leq 5 c_2\leq 5 OPT$. In the latter case, the blocked edges are exactly $B\cup \{e\}$, so \textsc{Backtrack} is 3-competitive, and the cost is at most $2c_1+3OPT\leq 5 OPT$. 
    \end{itemize}
    
    If in line~\ref{alg:test2} $P_i$ is not blocked then the algorithm is optimal if $i=1$, and the cost of the algorithm is $2c_1+c_2\leq 3 OPT$ if $i=2$.
    
    Let us now focus on the condition line~\ref{alg:test3}. Here again, if $e_i\not \in B$ is blocked, then the set of blocked edges is $B\cup\{e_i\}$, so we are able to directly find the optimal solution. Then the cost of the algorithm is at most (using $c_a\leq OPT$ here) $2c_1+2c_2+OPT\leq \left(\frac{2}{5}+\frac{14}{25}+1\right) OPT\leq 5 OPT$.

    The last case to check corresponds to line~\ref{alg:test4}. At this stage, we know that $c_a>\alpha_i c_i, i=1,2$, and that $e_1\in B$ and $e_2\in B$ are blocked. Then, as $k= 3$, there is at most 1 remaining blocked edge, so \textsc{Backtrack} is $3$-competitive. 
    \begin{itemize}
        \item If $B=\{e_1,e_2\}$, then $c_a\leq OPT$ and the cost of the algorithm is  $2c_1+2c_2+3OPT\leq \left(\frac{2}{5}+\frac{14}{25}+3\right) OPT\leq 5 OPT$.
        \item Otherwise, $B=\{e_1,e_2,e_3\}$. If $e_3$ is not blocked then the set of blocked edges is $\{e_1,e_2\}$, so \textsc{Backtrack} directly finds the optimum, and the cost is $2c_1+2c_2+OPT\leq 5 OPT$. Otherwise, $e_3$ is blocked, so $c_a\leq OPT$ and we get again that the cost is at most $2c_1+2c_2+3OPT\leq \left(\frac{2}{5}+\frac{14}{25}+3\right) OPT\leq 5 OPT$.
    \end{itemize} 
\end{proof}

\IncMargin{1em}
\begin{algorithm}
\SetKwData{Left}{left}\SetKwData{This}{this}\SetKwData{Up}{up}
\SetKwFunction{Union}{Union}\SetKwFunction{FindCompress}{FindCompress}
\SetKwInOut{Input}{Input}\SetKwInOut{Output}{Output}
\Input{An instance of CTP with prediction with parameter $k$, $B$ a set of predicted blocked edges}
\Output{An $s-t$ path}
\BlankLine

\uIf{$k>3$}{
    $P_1,c_1\leftarrow$ a shortest path, and its cost;
    
    Explore $P_1$;
    
    \lIf{$P_1$ is not blocked}{
    Return $P_1$
    }
    \lElseIf{a blocked edge $e\not\in B$ is
      found}{
      Remove $B\cup\{e\}$ and output a shortest path 
    }
    \lElse{Apply \textsc{Err1-Backtrack} with parameters $k-1$ and $B\setminus e$, where $e$ is the blocked edge}\label{algo:test199}
}    
\Else{
    $P_a,c_a\leftarrow$ a shortest path when $B$ is removed, and its cost; 
   
   $\alpha_1\leftarrow 5$, $\alpha_2\leftarrow \frac{25}{7}$; 
   
    \For{$i=1,2$}{
      $P_i,c_i\leftarrow$ a shortest path, and its cost;
   
      \lIf{$c_a\leq \alpha_i c_i$}{
         Apply \textsc{Backtrack} on the graph without $B$ and output the found path
      }\label{alg:test1}
      \Else{
        Explore $P_i$;
        
        \lIf{$P_i$ is not blocked}{Return $P_i$}\label{alg:test2}
        \lElseIf{a blocked edge $e_i\not\in B$   is found}{Remove $B\cup\{e_i\}$ and    output a shortest path
        }\label{alg:test3}
        \lElse{Remove the blocked edge $e_i\in B$ (and continue)
        }
      }
    }
 Apply \textsc{Backtrack} and output the found shortest path.\label{alg:test4}
}

\caption{\textsc{Err1-Backtrack}}\label{algo_err1BT}
\end{algorithm}\DecMargin{1em}

\subsection{Proof of Theorem~\ref{theorem2}}\label{app:theorem2}

{\bf Theorem~\ref{theorem2}.}
{\it There is no deterministic  algorithm that 
achieves competitive ratio smaller than $2k-1$, even when the prediction has $error$ at most $1$ and the graph is path-disjoint.}

\begin{proof}
Consider a graph $\mathcal{G}^*$ with $k+1$ paths $P_1$, $P_2$,..., $P_{k}$, $P_{k+1}$, which are node-disjoint.  $P_1$ is predicted to be unblocked and all the other paths are predicted to be blocked ($k$ blocks predicted). The paths $P_2$, $P_3$, $P_4$, ..., $P_{k+1}$ have costs $c_2=c_3=...=c_{k+1}=1$ and path $P_1$ has cost $c_1 > (2k+1)$.
    
Every deterministic algorithm corresponds to a permutation, which describes in which order the paths are being explored. The adversary can only create an instance with error no more than $1$.

\begin{itemize}

    \item If a deterministic algorithm chooses path $P_1$ before traversing all $k$ paths $P_2$, $P_3$, ..., $P_{k+1}$, then the adversary chooses a path $P_j$ ($2\leq j \leq k+1$), which has not been traversed yet, to be unblocked.
    
    In that case the algorithm has competitive ratio $r \geq \frac{c_1}{c_j}=c_1>2k+1$. 
    
    \item If a deterministic algorithm traverses first all $k$ paths $P_2$, $P_3$, ..., $P_{k+1}$, then the adversary chooses all those paths to be blocked except the last one which is the optimal. This incurs a cost of $(2(k-1) + 1)\cdot c_j= (2k-1)\cdot c_j$.
    
    Then the algorithm has a competitive ratio of $r \geq 2k-1$. 
    
\end{itemize}

Consequently, there is no deterministic prediction-augmented online algorithm with competitive ratio smaller than $2k-1$, even when the prediction has an error of $1$.
\end{proof}

\subsection{Proof of Theorem~\ref{theor_k_1}}\label{app:theor_k_1}

{\bf Theorem~\ref{theor_k_1}} {\it 
When $k=1$, there is no deterministic  algorithm   that achieves competitive ratio smaller than $3$, even when the prediction has error at most $1$ and the graph is path-disjoint.}

\begin{proof}
Consider a graph $\mathcal{G}^*$ with $2$ paths $P_1$, $P_2$, which are node-disjoint. All the paths have equal costs, meaning that $c_1=c_2$. All paths are predicted to be unblocked. 

Every deterministic algorithm corresponds to a permutation,
which describes in which order the paths are being explored. The adversary can only create an instance with error no more than $1$.

The adversary blocks the first path tried by a deterministic algorithm. Then it is clear that the algorithm has a competitive ratio of at least $3$.

Consequently, there is no deterministic prediction-augmented online algorithm with competitive ratio smaller than $3$, even when $error = 1$ ($k=1$).
\end{proof}

\subsection{Proof of Theorem~\ref{theoremLBk=2}}\label{app:LBk=2}

{\bf Theorem~\ref{theoremLBk=2}.} {\it
When $k=2$, there is no deterministic algorithm that achieves competitive ratio smaller than $\frac{3+\sqrt{17}}{2}$, even when the prediction has error at most $1$ and the graph is path-disjoint.}

\begin{proof}
Consider a graph $\mathcal{G}^*$ with $3$ paths $P_1$, $P_2$, $P_3$ which are node-disjoint and their corresponding costs are $c_1=1$, $c_2=c_3=\frac{3+\sqrt{17}}{2}$. Only $P_1$ is predicted to be blocked. 

Every deterministic algorithm corresponds to a permutation,
which describes in which order the paths are being explored. The adversary can only create an instance with error no more than $1$.

If a deterministic algorithm first chooses to traverse path $P_1$, then the adversary blocks that path and the next one tried ($P_2$ or $P_3$). This instance has an error of $1$ and any such algorithm has competitive ratio $\frac{2c_1+2c_2+c_3}{c_3}$ or $\frac{2c_1+2c_3+c_2}{c_2}$. Therefore, it achieves a competitive ratio of 
\begin{equation*}
    r=\frac{2c_1+2c_2+c_3}{c_3}= \frac{2c_1+2c_3+c_2}{c_2}=\frac{3+\sqrt{17}}{2}
\end{equation*}

On the other hand, if a deterministic algorithm chooses at first to traverse path $P_2$ or $P_3$, then the adversary unblocks $P_1$ (doesn't block $P_2$ or $P_3$ and creates an instance with $error=1$). Hence, all paths are feasible and the algorithm has competitive ratio
\begin{equation*}
    r=\frac{c_2}{c_1}=\frac{c_3}{c_1}=\frac{3+\sqrt{17}}{2}
\end{equation*}

Consequently, there is no deterministic prediction-augmented online algorithm with competitive ratio smaller than $\frac{3+\sqrt{17}}{2}$, even when the prediction has error at most $1$ ($k=2$).
\end{proof}

\subsection{Proof of Theorem~\ref{theoremUBk=2}}\label{app:theoremUBk=2}

{\bf Theorem~\ref{theoremUBk=2}}
{\it There exists a $\frac{3+\sqrt{17}}{2}$-competitive algorithm when the prediction error is at most $1$ and $k=2$.}

\begin{proof}
Algorithm \textsc{Err1-Backtrack2}  is described in Algorithm~\ref{algo_err1BT2} in a recursive form. The algorithm is a simple modification of \textsc{Err1-Backtrack} when $k \leq 3$ and the analysis is similar.
 
\IncMargin{1em}
\begin{algorithm}
\SetKwData{Left}{left}\SetKwData{This}{this}\SetKwData{Up}{up}
\SetKwFunction{Union}{Union}\SetKwFunction{FindCompress}{FindCompress}
\SetKwInOut{Input}{Input}\SetKwInOut{Output}{Output}
\Input{An instance of CTP with prediction with parameter $k=2$, $B$ a set of predicted blocked edges}
\Output{An $s-t$ path}
\BlankLine
    $P_a,c_a\leftarrow$ a shortest path when $B$ is removed, and its cost; 
   
   $\alpha_1\leftarrow \frac{3+\sqrt{17}}{2}$, $\alpha_2\leftarrow 1$; 
   
    \For{$i=1,2$}{
      $P_i,c_i\leftarrow$ a shortest path, and its cost;
   
      \lIf{$c_a\leq \alpha_i c_i$}{
         Apply \textsc{Backtrack} on the graph without $B$ and output the found path
      }\label{alg2:test1}
      \Else{
        Explore $P_i$;
        
        \lIf{$P_i$ is not blocked}{Return $P_i$}\label{alg2:test2}
        \lElseIf{a blocked edge $e_i\not\in B$   is found}{Remove $B\cup\{e_i\}$ and    output a shortest path
        }\label{alg2:test3}
        \lElse{Remove the blocked edge $e_i\in B$ (and continue)
        }
      }
    }
 Apply \textsc{Backtrack} and output the found shortest path.\label{alg2:test4}

\caption{\textsc{Err1-Backtrack2}}\label{algo_err1BT2}
\end{algorithm}\DecMargin{1em}

Let us first consider the case where condition $c_a\leq \alpha_ic_i$ in line~\ref{alg2:test1} occurs. 
\begin{itemize}
    \item If it holds at $i=1$, then $c_a\leq \alpha_1c_1=\frac{3+\sqrt{17}}{2} \cdot  c_1$. When applying \textsc{Backtrack}, either the first explored path is not blocked, or a new blocked edge $e\not\in B$ is found. In the first case, the cost of the algorithm is $c_a\leq \frac{3+\sqrt{17}}{2} \cdot c_1\leq \frac{3+\sqrt{17}}{2} \cdot OPT$. In the latter case, the blocked edges are {\it exactly} $B\cup \{e\}$, so \textsc{Backtrack} is $3$-competitive.
    \item  If it holds at $i=2$, then $c_a\leq \alpha_2c_2=c_2$, and $c_a> \alpha_1c_1=\frac{3+\sqrt{17}}{2} \cdot c_1$. Similarly, either the first explored path is not blocked, or a new blocked edge $e\not\in B$ is found. In the first case, the cost of the algorithm is $2c_1+c_a\leq \bigg(\frac{4}{3+\sqrt{17}}+1\bigg) c_a \leq \bigg(\frac{4}{3+\sqrt{17}}+1\bigg) c_2 < 2c_2 \leq 2 OPT$. In the latter case, the blocked edges are exactly $B\cup \{e\}$, so \textsc{Backtrack} is $3$-competitive, and the cost is at most $2c_1+3 OPT\leq \bigg(\frac{4}{3+\sqrt{17}}+3\bigg) OPT = \frac{3+\sqrt{17}}{2} \cdot OPT$. 
    \end{itemize}
    
    If in line~\ref{alg2:test2} $P_i$ is not blocked then the algorithm is optimal if $i=1$, and the cost of the algorithm is $2c_1+c_2\leq 3 OPT$ if $i=2$.
    
    Let us now focus on the condition line~\ref{alg2:test3}. Here again, if $e_i\not \in B$ is blocked, then the set of blocked edges is $B\cup\{e_i\}$, so we are able to directly find the optimal solution. Then the cost of the algorithm is at most (using $c_a\leq OPT$ here) $2c_1+2c_2+OPT \leq \left(\frac{4}{3+\sqrt{17}}+2+1\right) OPT = \frac{3+\sqrt{17}}{2} \cdot OPT$.

    The last case to check corresponds to line~\ref{alg2:test4}. At this stage, we know that $c_a>\alpha_i c_i, i=1,2$, and that $e_1\in B$ and $e_2\in B$ are blocked. Then, as $k= 2$, there are no more blocked edges, so \textsc{Backtrack} finds directly the optimal path. Then $c_a \leq OPT$ and the cost of the algorithm is
    $2c_1+2c_2+OPT \leq \left(\frac{4}{3+\sqrt{17}}+2+1\right) OPT = \frac{3+\sqrt{17}}{2} \cdot OPT$.
     
\end{proof}



\subsection{Proof of Theorem~\ref{th:randk}}\label{app:noke1}

{\bf Theorem~\ref{th:randk}.}
{\it There is no randomized algorithm  that 
achieves competitive ratio smaller than $k$ against an oblivious adversary, even when the prediction has error at most $1$ and the graph is path-disjoint.}

\begin{proof}
Consider a graph $\mathcal{G}^*$ with $k+1$ paths $P_1$, $P_2$, ..., $P_k$, $P_{k+1}$, which are node-disjoint. The paths $P_1$, $P_2$,..., $P_{k-1}$, $P_{k}$ have costs equal to $1$ ($c_1 = c_2 = ... = c_k =1$) and path $P_{k+1}$ has cost $c_{k+1} \geq k+1$.

The goal here is to prove the theorem by applying Yao's Principle. We choose $i \in \{1,...,k\}$ uniformly at random and block all paths $P_j$ with $j \in \{1,...,k\}$ and $j \ne i$ ($k-1$ blocked edges as all paths are node-disjoint). Hence, we have two feasible paths $P_i$ and $P_{k+1}$. All paths except for $P_{k+1}$ are predicted to be blocked. This prediction has an error of $1$.

The optimal offline cost is $1$ (path $P_i$). If a deterministic algorithm chooses the other feasible path $P_{k+1}$, then it has a competitive ratio of at least $k+1$. So, any deterministic online algorithm can't achieve a competitive ratio smaller than $k+1$ when choosing at any step path $P_{k+1}$. 

This is possible only if it finds path $P_i$. An arbitrary deterministic online algorithm finds path $P_i$ on the $l$th trial for $l=1,...,k$ with probability $\frac{1}{k}$ as all the first $k$ paths are predicted to be blocked.

If the algorithm is successful on its $l$th try, it incurs a cost of $2l-1$, and thus it has an expected cost of at least
\begin{equation*}
    cost \geq \frac{1}{k} \sum_{l=1}^{k}(2l-1) = \frac{1}{k} \cdot k^{2} = k
\end{equation*}

Therefore, the expected cost of any deterministic algorithm is at least $k$ and the optimal cost is $1$. It follows from Yao's Principle that the competitive ratio of any prediction-augmented randomized online algorithm is at least $k$, even when the prediction has an error of $1$.
\end{proof}

\end{document}